\newcommand\E{\mathbb E}
\renewcommand\P{\mathbb P}
\renewcommand\L{\mathcal{L}}
\newcommand\X{\mathbb X}
\renewcommand\Pr{\mathbb P}
\newcommand\A{\mathcal A}
\newcommand\B{\mathcal B}
\newcommand\C{\mathcal C}
\newcommand\BC{\mathcal{BC}}
\newcommand\ignore[1]{ }
\DeclareMathOperator{\NB}{NB}
\DeclareMathOperator{\Bern}{Bernoulli}
\newtheorem{theorem}{Theorem}[section]
\newtheorem{lemma}[theorem]{Lemma}
\newtheorem{proposition}[theorem]{Proposition}
\theoremstyle{definition}
\newtheorem{definition}{Definition}[section]
\newtheorem{remark}[theorem]{Remark}
\newtheorem{example}[theorem]{Example}
\theoremstyle:=definition,remark,plain\do{%
        \expandafter\g@addto@macro\csname th@\theoremstyle\endcsname{%
            \addtolength\thm@preskip\parskip
            }%
        }
\def\Geo{\mbox{Geo}}
\def\Bern{\mbox{Bern}}
\begin{document}

\def\spacingset#1{\renewcommand{\baselinestretch}%
{#1}\small\normalsize} \spacingset{1}

\def\LS{LS}

\parskip 10pt

\begin{center}
\LARGE Random Sampling of Latin squares via binary contingency tables and probabilistic divide-and-conquer\\[.3cm]
\normalsize Stephen DeSalvo\footnote{Department of Mathematics, University of California Los Angeles. stephendesalvo@math.ucla.edu}\\[.1cm]
March 24, 2017\\[.5cm]
\parbox{.8\textwidth}{
\small
\textit{Abstract.}
We demonstrate a novel approach for the random sampling of Latin squares of order~$n$ via probabilistic divide-and-conquer. 
The algorithm divides the entries of the table modulo powers of $2$, and samples a corresponding binary contingency table at each level. 
The sampling distribution is based on the Boltzmann sampling heuristic, along with probabilistic divide-and-conquer. 
}
\end{center}
{
\ignore{
\small
\smallskip
\noindent \textbf{Keywords.} Random Sampling, Latin Square, Sudoku, Probabilistic Divide-and-Conquer, Rejection Sampling

\noindent \textbf{MSC classes:} 60C05, 65C50, 60-04 \\
\noindent \textbf{CCS 2012:} Mathematics of computing$\sim$Probabilistic algorithms, Theory of computation$\sim$Generating random combinatorial structures
}
}

\section{Introduction}

\subsection{Motivation}

When tasked with random sampling from a complicated state space, a popular choice is to run a rapidly mixing Markov chain. 
Markov chains have proven to be an incredibly powerful and versatile tool, however, 
there are several reasons why one would consider alternative methods. 
In particular, unless the chain starts off in stationarity, or has been coupled from the past~\cite{ProppWilson}, the error is seldom zero after a finite number of steps; quite often this is an acceptable tradeoff, and through a more complicated analysis one can bound the sampling error.
For sampling problems with a large number of constraints, however, it is often difficult to analyze the resulting Markov chain and prove mixing time bounds. 
In addition, there is no general scheme to parallelize the sequential steps in a Markov chain, and many naive approaches can lead to unintended sampling bias~\cite[Chapter~18]{wilkinson2006parallel}.

Another paradigm of random sampling, popular in combinatorics, is the Boltzmann sampler~\cite{Boltzmann, Flajolet}, where one samples from some combinatorial structure via independent random variables, determined by the form of the generating function to closely resemble the true distribution of component sizes in the random combinatorial structure. 
The result is a random combinatorial object with a random set of statistics, referred to as a \emph{grand canonical ensemble} in the statistical mechanics literature. 
For a fixed set of conditions, an exact sample can be obtained by sampling repeatedly until all of the conditions are satisfied, throwing away samples which miss the target~\cite{Duchon:2011aa}. 
We have the structure which is typical of Boltzmann sampling, which is a collection of independent random variables subject to a condition. 
However, unlike the plethora of combinatorial structures for which a randomly generated structure with random statistics contains an acceptable amount of bias, there are good reasons to demand that all statistics are satisfied, see for example~\cite{chen, PittelSetPartitions}. 
Fortunately, owing to the large amount of independence in the formulation, Boltzmann samplers are embarrassingly parallel, offering many different effective means of parallelizing the computation. 
Unfortunately, as with many multivariate combinatorial structures, the rejection costs associated with waiting until all statistics match exactly are prohibitively large, even with effective parallelizing. 

The Boltzmann sampler has been described as a generalization of the table methods of Nijenhuis and Wilf, also known as the \emph{recursive method}, see~\cite{NW}, although from our point of view it has a distinctly different flavor. 
The recursive method champions creating a table of numerical values which can be used to generate individual components of a sample in its correct proportion in an unbiased sample. 
This approach, however, demands the computing of a lookup table, which can be both prohibitively large in size and prohibitively long to compute each entry, typically by a recursion. 

Our approach lies somewhere in between the recursive method and Boltzmann sampling, with a distinctly probabilistic twist, and which avoids Markov chains altogether. 
Our approach, \emph{probabilistic divide-and-conquer} (PDC)~\cite{PDC}, provides an object which satisfies all constraints, and which targets the marginal distribution of components, in this case entries in a table, according to the Boltzmann sampling principle.  
Rather than sample all entries of the table at once and apply a rejection function which is either 0 or 1, we instead sample one bit of each entry in the table, one at a time, essentially building the table via its bits, starting with the least significant bit. 
In the case where rejection sampling probabilities are known or can be computed to arbitrary accuracy, the algorithm is unbiased. 
Probabilistic divide-and-conquer is similar in many respects to the recursive method, by approaching the problem in pieces and selecting a piece in proportion to already observed pieces. 
However, whereas the recursive method typically involves a more geometric/spatial decomposition of a finite, discrete sample space, we instead champion a probabilistic decomposition of the sample space, by decomposing the random variables themselves which describe random component-sizes. 
This approach also generalizes in a straightforward manner to sampling from continuous sample spaces and lower dimensional subspaces, see~\cite{PDCDSH}, as long as the random variables can be decomposed effectively. 

In the remainder of this section, we define $(r,c)$-contingency tables and Latin squares of order~$n$, and describe some of the standard algorithms to randomly sample them. 
Section~\ref{novel_algorithm} contains the highest-level explanation of our proposed algorithm for Latin squares of order~$n$, and in Section~\ref{section_probabilistic} we present the probabilistic tools which justify this approach. 
Section~\ref{algorithms} contains the complete statements of the algorithms, along with a word of caution in the final subsection. 

\subsection{$(r,c)$-contingency tables}


\begin{definition}
An \emph{$(r,c)$-contingency table} is an $m$ by $n$ table of nonnegative integer values with row sums given by $r = (r_1, \ldots, r_m)$ and column sums given by $c = (c_1, \ldots, c_n)$. 
A table for which the entries are further assumed to be in the set $\{0,1\}$ is called a \emph{binary $(r,c)$-contingency table}.
\end{definition}

The exact, uniform sampling from the set of $(r,c)$-contingency tables is a well-studied problem. 
Owing to the large parameter space, there are a plethora of results spanning many decades pertaining just to counting the number of such tables, see for example~\cite{Barvinok, BenderTables, Soules, BaldoniSilva, DeLoeraRational, BarvinokInequalities, BarvinokPermanents, GreenhillMcKay, BarvinokIntegerFlows, BarvinokApproximate, BarvinokHartigan, DiaconisGangolli, bender1978asymptotic, ONeil, GoodCrook}. 
There is also interest in various Markov chain approaches, including coupling from the past, see for example~\cite{chen, cryan2003polynomial, cryan2006rapidly, diaconissturmfels, fishman2012counting, dyergreenhill,kitajimamatsui, JerrumSinclair, Brualdi2007}. 

The random sampling of contingency tables is an extensive topic, and we do not attempt to recount all previously known results in this area, and instead refer the interested reader to, e.g., the introduction in~\cite{cryan2006rapidly} and the references therein. 
As mentioned previously, the main sampling approach traditionally championed is to use a Markov chain, e.g., starting with the general framework in~\cite{diaconissturmfels}, to define an appropriate state transition matrix, and then prove that such a chain is rapidly mixing; indeed, this approach has been profoundly fruitful, see for example~\cite{dyergreenhill, kitajimamatsui, JerrumSinclair}. 
Explicitly, one state transition championed in~\cite{DiaconisGangolli} is to sample two ordered rows and two ordered columns uniformly at random, and, \emph{if possible}, apply the transformation to their entries 
\[ \left(\begin{array}{cc} a_{11} & a_{12} \\ a_{21} & a_{22} \end{array}\right) \to \left(\begin{array}{cc} a_{11}+1 & a_{12} - 1 \\ a_{21} -1 & a_{22} + 1 \end{array}\right). \]
If such a transformation would force the table to lie outside the solution set, then a different pair of ordered rows and columns is generated. 
This chain was shown in~\cite{DiaconisGangolli} to be ergodic and converge to the uniform distribution over the set of contingency tables. 
Mixing times were later proved in~\cite{diaconis1993comparison, hernek1998random, chung1996sampling} in various contexts. 
Other Markov chains have also been proposed, see for example~\cite{dyergreenhill, cryan2006rapidly}. 
The recent book~\cite{huber2015perfect} also contains many more examples involving Markov chains and coupling from the past to obtain exact samples. 

An arguably unique approach to random sampling of contingency tables is contained in~\cite{dyer1997sampling}. The algorithm associates to each state a parallelepiped with respect to some basis, and defines some convex set which contains all of the parallelepipeds. 
Then, one samples from this convex set, and if the sample generated lies within one of the parallelepipeds, it returns the corresponding state; otherwise, restart. 
This algorithm was shown to be particularly effective  in~\cite{morris2002improved} when the row sums are all $\Omega(n^{3/2} m\log m)$ and the column sums are all $\Omega(m^{3/2} n\log n)$.

Recently, a self-similar PDC algorithm was championed for the random sampling of contingency tables~\cite{DeSalvoCT}, offering an arguably new approach to random sampling of these intricate structures. 
In related work, the author demonstrated how to utilize PDC to improve upon existing algorithms for exact random sampling of Latin squares of order~$n$~\cite{DeSalvoSudoku}. 
The current work, motivated by improving further still the random sampling of Latin squares of order $n$, extends the original PDC sampling algorithm for contingency tables to the more general case when certain entries are forced to be 0, see for example~\cite{BrualdiDahl, bezakova}. 
The probabilistic analysis is straightforward, although the computational complexity changes drastically. 
Nevertheless, our numerical experiments demonstrate that this approach is simple, practical and executes fairly rapidly for a large class of tables. 
Thus, we champion our approach for practitioners looking for simple, alternative methods for sampling quickly from these intricate structures. 


\subsection{Latin squares of order~$n$}
\label{novel_algorithm}

\begin{definition}
A Latin square of order $n$ is an $n \times n$ table of values such that the numbers from the set $\{1,2,\ldots, n\}$ appear exactly once in each row and in each column. 
\end{definition}

There are many techniques available for random sampling of Latin squares of order~$n$, see for example~\cite[Section~6]{DeSalvoSudoku}. 
A Markov chain approach is contained in~\cite{JacobsonMatthews}; in particular, they construct and analyze two different Markov chains. 
The transition states are described via an explicit algorithm, requiring $O(n)$ time to transition from one state to the next, and guaranteed to satisfy all of the Latin square constraints. 
However, as far as we are aware, even though the stationary distribution was proved to be uniform over the set of Latin squares of order~$n$, neither of the Markov chains presented have been shown to be rapidly mixing. 
Another approach is to decompose the entries of a Latin square as sums of mutually disjoint permutation matrices, as in~\cite{Dahl, Fontana, FontanaFractions, YordzhevNumber}. 
This approach has practical limitations, as the straightforward rejection algorithm is prohibitive, and the more involved approach in~\cite{Fontana} requires an auxiliary computation which dominates the cost of the algorithm. 
We suspect that this approach may benefit from a probabilistic divide-and-conquer approach, though we have not carried out the relevant analysis. 

An informal description of our algorithm for random sampling of Latin squares of order~$n$ is as follows; see Algorithm~\ref{latin:square:algorithm} for the complete description.  For simplicity of exposition, we assume $n = 2^m$ for some positive integer $m$, which has no theoretical significance, but avoids cumbersome notation involving rounding. 
First, sample an $n \times n$ binary contingency table with all row sums and column sums equal to $n/2$. 
Then consider the subset of entries in which a 1 appears, of which there are $n/2$ in each row and column, and sample within that subset of entries an $(n/2) \times (n/2)$ binary contingency table with all row sums and column sums equal to $n/4$; do the same for the subset of entries in which a 0 appears.
Repeat this process through all $m$ levels. 
By interpreting the sequence of $m$ 1s and 0s as the binary expansion of a positive integer, we obtain a Latin square of order~$n$. 

This idea, that of generating a Latin square by its bits, could be considered a straightforward divide-and-conquer algorithm if one simply desires an object from the solution set. 
What makes our treatment more intricate is our approximation heuristic, designed to target the uniform distribution over the solution set; see Section~\ref{latin_square_heuristic}. 

Of course, in order to avoid bias in the sampling algorithm, we must sample each of the $n/2^\ell \times n/2^\ell$ binary contingency tables in their correct proportion of partially completed Latin squares of order~$n$; this is not such a trivial task, and requires either counting all possible completions, a computationally extensive task, or a more direct analysis of the resulting multivariate probability governing the acceptance probability, given explicitly in Equation~\eqref{fij}.  While both approaches present technical difficulties, the probabilistic formulation yields a natural approximation heuristic, which we present in Equation~\eqref{alternative:rejection}, which is computable in polynomial time.  It assumes independence of columns, a reasonable approximation heuristic in many parameter spaces of interest, while still enforcing several necessary conditions. 
A more complete probabilistic description of our heuristic is contained in Section~\ref{latin_square_heuristic}.

\begin{example}
The following is an example of how to apply the algorithm described above. 
\[ 
\begin{array}{rccrcc}
& & &
\left(
\begin{array}{ccccc}
 1 & 1 & 0 & 1 & 0 \\
 0 & 1 & 1 & 0 & 1 \\
 1 & 0 & 1 & 0 & 1 \\
 1 & 0 & 1 & 1 & 0 \\
 0 & 1 & 0 & 1 & 1
\end{array}
\right)  & & \\
& & \swarrow &  \searrow & & \\
& \left(\begin{array}{ccccc}
 1 & 0 & \blacksquare & 1 & \blacksquare \\
 \blacksquare & 1 & 0 & \blacksquare & 1 \\
 0 & \blacksquare & 1 & \blacksquare & 1 \\
 1 & \blacksquare & 1 & 0 & \blacksquare \\
 \blacksquare & 1 & \blacksquare & 1 & 0
\end{array}\right)
 & & & 
\left( \begin{array}{ccccc}
 \blacksquare & \blacksquare & 0 & \blacksquare & 1 \\
 1 & \blacksquare & \blacksquare & 0 & \blacksquare \\
 \blacksquare & 0 & \blacksquare & 1 & \blacksquare \\
 \blacksquare & 1 & \blacksquare & \blacksquare & 0 \\
 0 & \blacksquare & 1 & \blacksquare & \blacksquare
\end{array} \right)\\
\swarrow &  & & &  \\
\left( \begin{array}{ccccc}
 1 & \blacksquare & \blacksquare & 0 & \blacksquare \\
 \blacksquare & 1 & \blacksquare & \blacksquare & 0 \\
 \blacksquare & \blacksquare & 0 & \blacksquare & 1 \\
 0 & \blacksquare & 1 & \blacksquare & \blacksquare \\
 \blacksquare & 0 & \blacksquare & 1 & \blacksquare
\end{array}\right) \\
\end{array}
\]
This corresponds to the following order $5$ Latin square
\[ \left(\begin{array}{ccccc}
 5 & 1 & 2 & 3 & 4 \\
 4 & 5 & 1 & 2 & 3 \\
 1 & 2 & 3 & 4 & 5 \\
 3 & 4 & 5 & 1 & 2 \\
 2 & 3 & 4 & 5 & 1
\end{array}\right).
\]
\end{example}


\ignore{
\begin{remark}\label{bit_independence}
It is not a priori obvious whether the bits at different levels are independent, and whether certain configurations of binary tables at a given level can potentially be completed by a larger number of Latin squares than other configurations. 
For $n=6$, we are able to provide a negative answer by a simple observation. 
According to the OEIS sequence~A058527~\cite{OEIS}, the number of $6 \times 6$ binary contingency tables with row sums and column sums equal to 3 is 297200, which does not divide the number of Latin squares of order $6$, which is 812851200, see for example~\cite{McKayWanless}.  Thus, some of the  configurations of binary tables must necessarily yield a different number of completable Latin squares. 
Nevertheless, our approach exploits the principles of Boltzmann sampling, which has been previously utilized in various contexts involving contingency tables, see for example~\cite{barvinok2012asymptotic, barvinok2010approximation}. 
\end{remark}
}

\begin{remark}
There is an extensive history related to \emph{partially completed Latin squares of order~$n$}, which are Latin squares of order~$n$ for which only some of the $n^2$ entries have been filled in, and none of those entries a priori violate any of the Latin square conditions. 
In particular, the treatment in~\cite[Chapter~17]{vanLintWilson} related to partially completed Latin squares pertains solely to leaving entire entries filled or unfilled, whereas the algorithm described above is of a distinctly different flavor. 
For example,~\cite[Theorem~17.1]{vanLintWilson} states that all Latin squares of order~$n$ with their first $k$ rows filled in can be extended to a partially completed Latin square with the first $k+1$ rows filled in, for $k=1,2,\ldots,n-1$. 
There are also ways of filling in elements for which no possible Latin square can be realized given those elements; in general, deciding whether a Latin square of order~$n$ with an arbitrary set of squares filled in can be completed is an NP-complete problem, see~\cite{colbourn1984complexity}. 
It would be interesting to explore the analogous analysis for our proposed bit-by-bit approach. 
\end{remark}

\section{Probabilistic approach}
\label{section_probabilistic}
\subsection{Probabilistic divide-and-conquer}

Probabilistic divide-and-conquer (PDC) is an approach for exact sampling by dividing up the sample space into two pieces, sampling each one separately, and then piecing them together appropriately~\cite{PDC}. 
In order to sample using PDC, one starts with a sample space $\Omega$ decomposed into two separate sets $\Omega = \mathcal{A} \times \mathcal{B}$, and writes the target distribution as 
\begin{equation}\label{LS}
\L(S) = \L((A, B) | E),
\end{equation}
 where $A \in \mathcal{A}$ and $B \in \mathcal{B}$ have given distributions and are independent, and $E \subset \mathcal{A}\times \mathcal{B}$ is some measurable event of the sample space.  
Whereas rejection sampling can be described as sampling $a$ from $\L(A)$ and $b$ from $\L(B)$, and then checking whether $(a,b) \in E$ (see Algorithm~\ref{WTGL procedure} below), PDC samples $x$ from $\L(A | E)$ and $y$ from $\L(B | E, a)$, and returns $(x,y)$ as an exact sample (see Algorithm~\ref{PDC procedure} below).

\begin{algorithm}{\rm
\begin{algorithmic}
\State 1.  Generate sample from $\L(A)$, call it $a$.
\State 2.  Generate sample from $\L(B)$, call it $b$.
\State 3.  Check if $(a,b) \in E$; if so, return $(a,b)$, otherwise restart.
\end{algorithmic}}
\caption{\cite{PDC} Standard rejection sampling from $\L((A,B) | E)$}
\label{WTGL procedure}
\end{algorithm}

\begin{algorithm}{\rm
\begin{algorithmic}
\State 1. Generate sample from $\L(A\, |\, E)$, call it $x$.
\State 2. Generate sample from $\L(B\, |\, E, A=x)$ call it $y$.
\State 3. Return $(x,y)$.
\end{algorithmic}}
\caption{\cite{PDC} Probabilistic Divide-and-Conquer sampling from $\L((A,B) | E)$} 
\label{PDC procedure}
\end{algorithm}

\begin{lemma}{PDC Lemma~\cite[Lemma~2.1]{PDC}.}
Suppose $E$ is a measurable event of positive probability. 
Suppose $X$ is a random element of $\A$ with distribution 
\begin{equation}\label{def X}
   \L(X) = \L( \, A \, | \, E \, ),
\end{equation}
and  $Y$ is a random element of $\B$ with conditional distribution
\begin{equation}\label{def Y}
   \L(Y \, | X=a ) = \L( \, B \, | \, E, A=a \, ).
\end{equation}
Then $(X,Y) =^d S$, i.e., the pair $(X,Y)$ has the same distribution  as $S$, given by \eqref{LS}.
\end{lemma}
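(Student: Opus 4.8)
The plan is to verify the claimed equality of distributions by a direct factorization (chain rule) computation, matching the law of the constructed pair $(X,Y)$ against the law of $S$ pointwise. I would treat the discrete case first, since that is all that is needed for the contingency-table and Latin-square applications, and then indicate how to lift the argument to a general measurable space.

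First I would fix $a \in \A$ and $b \in \B$ and expand the joint law of $(X,Y)$ by the multiplication rule, $\Pr((X,Y)=(a,b)) = \Pr(X=a)\,\Pr(Y=b \mid X=a)$. By hypothesis \eqref{def X} the first factor is $\Pr(A=a \mid E)$, and by hypothesis \eqref{def Y} the second is $\Pr(B=b \mid E, A=a)$. Next I would write each of these conditional probabilities as a ratio and exploit the independence of $A$ and $B$. Writing $E_a = \{b' \in \B : (a,b') \in E\}$ for the section of $E$ at $a$, independence gives $\Pr(A=a, E) = \Pr(A=a)\,\Pr(B \in E_a)$, so the first factor equals $\Pr(A=a)\,\Pr(B \in E_a)/\Pr(E)$; and since fixing $A=a$ reduces the event $E$ to $\{B \in E_a\}$, the second factor equals $\Pr(B=b)\,\I[(a,b)\in E]/\Pr(B \in E_a)$.

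The key step is then the telescoping cancellation: multiplying the two factors cancels $\Pr(B \in E_a)$, and recognizing $\Pr(A=a)\,\Pr(B=b) = \Pr((A,B)=(a,b))$ by independence yields
\[
\Pr((X,Y)=(a,b)) = \frac{\Pr(A=a)\,\Pr(B=b)\,\I[(a,b)\in E]}{\Pr(E)} = \Pr(S=(a,b)),
\]
which is exactly the law of $S$ given by \eqref{LS}. Since $E$ has positive probability every denominator is nonzero, so all of these manipulations are legitimate, and the equality holds for every $(a,b)$, establishing $(X,Y) =^d S$.

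The main obstacle is not the algebra but the rigor required in the general (non-discrete) setting. There the conditional law $\L(B \mid E, A=a)$ must be interpreted as a regular conditional distribution, defined for $\L(A \mid E)$-almost every $a$, and the pointwise cancellation above must be replaced by a disintegration-of-measure argument: I would integrate a measurable rectangle against the product law of $(A,B)$, divide by $\Pr(E)$, and then extend from rectangles to arbitrary measurable subsets of $\A \times \B$ by a $\pi$--$\lambda$ (monotone-class) argument. Positivity of $\Pr(E)$ again guarantees the relevant conditional measures are well-defined, and modulo these measure-theoretic formalities the computation is identical to the discrete one.
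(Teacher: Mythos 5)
The paper itself does not prove this lemma --- it is imported verbatim from~\cite[Lemma~2.1]{PDC} --- so there is no internal proof to compare against; judged on its own, your argument is correct, and it is essentially the standard chain-rule argument used in the cited source. One structural observation: your proof is more elaborate than it needs to be. Writing each conditional probability as a ratio, the telescoping happens immediately,
\[
\Pr(X=a)\,\Pr(Y=b\mid X=a)
= \frac{\Pr(A=a,\,E)}{\Pr(E)}\cdot\frac{\Pr(A=a,\,B=b,\,E)}{\Pr(A=a,\,E)}
= \frac{\Pr\bigl((A,B)=(a,b),\,E\bigr)}{\Pr(E)},
\]
which is exactly $\Pr(S=(a,b))$; neither the independence of $A$ and $B$ nor the section $E_a$ is ever needed. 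Independence is part of the sampling framework (it is what makes $\L(A)$ and the rejection functionals computable in practice), but it plays no role in the validity of the lemma, so your detour through $\Pr(B\in E_a)$ is correct but buys nothing. One small imprecision to fix: positivity of $\Pr(E)$ does not make ``every denominator nonzero'' --- $\Pr(B\in E_a)$, equivalently $\Pr(X=a)$, can vanish for particular $a\in\A$. The repair is to apply the factorization only for $a$ in the support of $\L(A\mid E)$ (where the conditional law \eqref{def Y} is defined in the first place) and to note that both $\Pr((X,Y)=(a,b))$ and $\Pr(S=(a,b))$ vanish off that support. Your sketch of the general case --- regular conditional distributions plus a $\pi$--$\lambda$ extension from measurable rectangles --- is the right way to lift the computation beyond the discrete setting.
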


Often, however, the conditional distributions in Algorithm~\ref{PDC procedure} are not practical to sample from, and so the following PDC algorithm which uses rejection sampling has been proven to be effective and practical in many situations, see for example~\cite{PDCDSH, DeSalvoImprovements, DeSalvoSudoku}.

\begin{algorithm}[H]{\rm
\begin{algorithmic}
\State 1. Generate sample from $\L(A),$ call it $a$.
\State 2. Reject $a$ with probability $1-s(a)$, where $s(a)$ is a function of $\L(B)$, $E$;  otherwise, restart. 
\State 3. Generate sample from $\L(B\, |\, (a,B) \in E),$ call it $y$.
\State 4. Return $(a,y)$.
\end{algorithmic}}
\caption{PDC sampling from $\L( (A,B)\, |\, (A,B)\in E)$ using soft rejection sampling}
\label{PDC procedure von Neumann}
\end{algorithm}

\ignore{
Of course, while we have started with the problem of sampling from one conditional distribution, now we have two!  
Fortunately, by splitting up the randomness into two pieces, which we have the freedom to choose based on available data, it is possible to more effectively target the smaller conditional distribution. 
A large class of PDC algorithms, for example, chooses $\A$ and $\B$ so that, conditional on observing $A = a$, $\{b \in \B : (a,b) \in E\}$ has cardinality at most 1, i.e., the second stage of Algorithm~\ref{PDC procedure} is deterministic, and all of the randomness is contained in the first stage.  
Further details and examples can be found in~\cite{PDCDSH}. 
The example below shows that this idea, decomposing the sample space into two pieces, one of which is deterministic, is a particularly good strategy which requires very little structural knowledge of the sample space, and provides a guaranteed speedup over traditional Boltzmann samplers. 
}

\ignore{
\begin{example}
Let us pause at this point to note a particularly elegant divide-and-conquer strategy for the random sampling of combinatorial structures of the form 
\[\L\left( (X_1, \ldots, X_n) \middle| \sum_{i=1}^n i\, X_i = n\right), \]
where $X_1, X_2, \ldots, X_n$ are independent random variables, typically coming from some family of distributions like Poisson, geometric, or Bernoulli.  
Such structures are typical in combinatorics, see for example~\cite{IPARCS, Boltzmann, Flajolet}, and represent an initial benchmark for improvements to the random sampling of classical combinatorial structures like integer partitions, set partitions, etc.

Consider two divisions: the first is
\[ A_1 = (X_1, X_2, \ldots, X_n), \qquad B_1 = \emptyset \]
and the second is, for any $I \in \{1,2,\ldots,n\}$, 
\[ A_2 = (X_1, \ldots, X_{I-1}, X_{I+1}, \ldots, X_n), \qquad B_2 = (X_I), \]
with $E = \{ \sum_{i=1}^n i\, X_i = n\}$. 
The first division is in fact rejection sampling~\cite{Rejection}, whereas the second has been dubbed PDC deterministic second half, see~\cite{PDCDSH}. 

In order to sample from $\L(A_2 | E) = \L\left((X_1, \ldots,X_{I-1},X_{I+1},\ldots, X_n) \middle| \sum_{i=1}^n i\, X_i = n\right)$, we use what we call \emph{soft rejection sampling}, see~\cite{Rejection}, see also~\cite[Chapter 2]{devroye}. 
That is, we sample from $\L(A_2)$, i.e., the unconditional distribution, and apply a rejection probability to obtain the appropriate conditional distribution. 
The rejection probability is then \emph{proportional} to
\begin{align*}
 \frac{\P( (X_2, \ldots, X_n) = (x_2,\ldots,x_n) | \sum_{i=1}^n i\, X_i = n)}{\P( (X_2, \ldots, X_n) = (x_2,\ldots,x_n) )}& = \frac{\P(X_1 = n - \sum_{i=2}^n i\, X_i | X_2=x_2, \ldots, X_n=x_n)}{\P(\sum_{i=1}^n i\, X_i = n)}. \\
 & = \frac{\P(X_1 = n - \sum_{i=2}^n i\, x_i)}{\P(\sum_{i=1}^n i\, X_i = n)}.
 \end{align*}
Specifically, we obtain the unbiased sampling algorithm in Algorithm~\ref{PDC discrete} below, see~\cite{PDCDSH}, where $U$ denotes a uniform random variable between $0$ and $1$, independent of all other random variables.
\begin{algorithm}
\caption{\cite{PDC, PDCDSH} PDC deterministic second half for sampling from $\L(\, (X_1, X_2, \ldots, X_{n})\ |\ E)$}
\begin{algorithmic}
\State { {\bf input:} Discrete distributions $\L(X_1), \L(X_2), \ldots, \L(X_n), E, I$}
\State Sample from $\L(X_1, \ldots, X_{I-1},X_{I+1},\ldots,X_n)$, denote the observation by $x^{(I)}$.
\State Let $x_I$ denote the unique value for which $(x^{(I)}, x_I) \in E$.
\If {$U< \frac{P\left(X_I = x_I\right)}{\max_\ell P(X_I = \ell)}$ }
		\State {\bf return} $(x^{(I)},x_I)$
\Else
\State {\bf restart}
\EndIf
\end{algorithmic} \label{PDC discrete}
\end{algorithm}

\begin{proposition}
Choose any $I \in \{1, 2, \ldots, n\}$.  Algorithm~\ref{PDC discrete} is faster, in terms of the expected number of random bits required, than an exact Boltzmann sampler by a factor of $(\max_\ell \P(X_I = \ell))^{-1}$.
\end{proposition}
\begin{proof}
The acceptance set for Boltzmann sampling can be formulated as 
\[ \{(X_1, \ldots, X_{I-1}, X_{I+1},\ldots, X_n, U) : \sum_{i\neq I} i\, X_i \leq n \mbox{ and } U < \P(X_I = n - \sum_{i \neq I} i\, X_i) \}, \]
and the acceptance set for PDC deterministic second half can be formulated as 
\[ \left\{(X_1, \ldots, X_{I-1}, X_{I+1},\ldots, X_n, U) : \sum_{i\neq I} i\, X_i \leq n \mbox{ and } U < \frac{\P(X_I = n - \sum_{i \neq I} i\, X_i)}{\max_\ell \P(X_I = \ell)} \right\}.  \qedhere \]
\end{proof}
\end{example}

\begin{remark}
The PDC deterministic second half approach is \emph{not} simply an application of soft rejection sampling, see~\cite[Section~5]{PDCDSH}. 
Soft rejection sampling is indeed a great complementary technique for which sampling from conditional distributions can be made more accessible, but it is by no means necessary in order to apply PDC in general, especially if one can sample from the conditional distributions directly.
\end{remark}
}

\begin{remark}
A surprisingly efficient division occurs when the sets $A$, $B$, and $E$ are such that $ \L( \, B \, | \, E, A=a \, )$ is trivial for each $a \in \mathcal{A}$. 
In this case, it was shown in~\cite{PDC, PDCDSH}, that a nontrivial speedup can be obtained by relatively simple arguments. 
\end{remark}

\begin{remark}
A more recent application in~\cite{DeSalvoImprovements} uses the recursive method~\cite{NW, NWbook} in order to obtain both the rejection probability $s(a)$ along with the conditional distributions $\L(B | (a,B) \in E)$. 
It was proved to have a smaller rejection rate than exact Boltzmann sampling, and also requires less computational resources than the full recursive method. 
\end{remark}

\ignore{
We now recall a division from~\cite{PDC} which is recursive and self-similar, and serves as the inspiration for the algorithm championed in this paper. 
Let $0<x<1$, and let $Z_1(x), Z_2(x), \ldots, Z_n(x)$ denote independent geometric random variables with $Z_i \equiv Z_i(x)$ having distribution which is geometric with parameter $1-x^i$, $i=1,2,\ldots,n$. 
Conditional on $\sum_{i=1}^n i\, Z_i(x) = n$, the random variables $Z_i(x)$, $i=1,2,\ldots, n$, represent the number of parts of size~$i$ in a uniformly chosen integer partition of size~$n$; see~\cite{Fristedt, VershikKerov, Vershik}. 
Our problem is to sample from $\L\left(Z_1(x), \ldots, Z_n(x) \middle | \sum_{i=1}^n i\, Z_i(x) = n\right)$. 

The division championed in this case is inherently probabilistic, and relies on a property of geometric random variables. 
Namely, letting $G(q)$ and $G'(q^2)$ denote independent geometric random variables with parameters $1-q$ and $1-q^2$ respectively, and letting $B(p)$ denote an independent Bernoulli random variable with parameter $p$, we have the decomposition 
\begin{equation}\label{decomposition} G(q) = B\left(\frac{q}{1+q}\right) + 2G'(q^2), 
\end{equation}
where the equality is in distribution. 
In other words, this decomposition shows that the bits in a geometric random variable are independent, and specifies their distribution.
It also suggests a PDC algorithm; that is, sample the least significant bits first, one at a time, and then the remaining part is twice a copy of the first part with a different parameter. 
As a bonus, many sampling algorithms have a tilting parameter which can be chosen arbitrarily, as is the case for parameter $x$ for integer partitions; after each stage of the PDC algorithm, we may adjust the tilting parameter to more optimally target the remaining sampling problem. 

Explicitly, 
we use the division
\[ A = (\epsilon_1, \ldots, \epsilon_n), \qquad B = (2Z_1(x^2), 2Z_2(x^4), \ldots, 2 Z_n(x^{2n})), \]
where $\epsilon_i$ denotes the least significant bit of $Z_i$, distributed as a Bernoulli random variable with parameter $x^i/(1+x^i)$. 
Even though we have written the geometric random variables in $B$ in terms of a new parameter $x^2$, since the distribution we are sampling from is conditionally independent of $x$, we are free to choose any $x' \in (0,1)$ which more optimally targets the sampling distribution, which is realized after the bits in $A$ are sampled. 

An application of Algorithm~\ref{PDC procedure von Neumann} in this setting is the following: Sample from $A = (\epsilon_1, \ldots, \epsilon_n)$ and let $a = \sum_{i=1}^n i\, \epsilon_i$.  
Then we have 
\begin{equation}\label{sa} s(a) = \frac{\P(\sum_{i=1}^n (2i) Z_{2i} = n-a)}{\max_\ell \P( \sum_{i=1}^n (2i)\, Z_{2i} = \ell)} =  \frac{\P\left(\sum_{i=1}^{(n-a)/2} i Z_{i} = \frac{n-a}{2}\right)}{\max_\ell \P\left( \sum_{i=1}^{\ell} i\, Z_{i} = \ell\right)} = \frac{p\left(\frac{n-a}{2}\right)x^{(n-a)/2} \prod_{i=1}^{(n-a)/2} (1-x^i)}{\max_\ell p(\ell) x^\ell \prod_{i=1}^\ell (1-x^i)}, \end{equation}
where $p(n)$ denotes the number of integer partitions of $n$, which is an extremely well-studied sequence; see for example~\cite{PDC} and the references therein. 
In addition, we have $\L(B | (a,B) \in E)$ is equivalent to the original problem with $n$ replaced by $(n-a)/2$, hence the description of this algorithm as self-similar. 
In terms of random sampling of integer partitions, this is also within a constant factor of the fastest possible algorithm; see~\cite[Theorem 3.5]{PDC}.
More importantly, however, is that this type of divide-and-conquer is inherently probabilistic, table-free, and can be applied to other structures which are modeled by geometric random variables, namely,  contingency tables. 

}

Previous applications of PDC have considered mutually independent random variables $X_1, X_2, \ldots, X_n$, with target space given by some measurable event $E$, with the goal to sample from the distribution
\[ \L(\, (X_1, \ldots, X_n)\, | \, E). \]
We assume that the unconditional distributions $\L(X_1), \ldots, \L(X_n)$ are known and can be sampled. 
For general two-dimensional tables, we consider sampling problems of the form: $X_{1,1}, \ldots, X_{m,n}$ are mutually independent random variables, $E$ is some measurable event, and the goal is to the sample from the distribution
\begin{equation}\label{distribution}
 \L\left(\begin{array}{ccc}
X_{1,1}, & \ldots & X_{1,n}, \\
\vdots & \vdots & \vdots \\
X_{m,1}, & \ldots & X_{m,n}
\end{array}\, \middle|\, E \right).
\end{equation}

The uniform distribution over all $(r,c)$-contingency tables can be obtained, see for example~\cite{BarvinokRandomCT, DeSalvoCT}, by taking $X_{i,j}$ to be geometrically distributed with parameter $p_{i,j} = \frac{m}{m+c_j}$, $i=1,\ldots,m$, $j=1,\ldots,n$, subject to the measurable event
\begin{equation}\label{ctE} E \equiv E_{r,c} = \left\{  \ \begin{array}{l} \sum_{\ell=1}^n X_{i,\ell} = r_i, ~\forall ~ 1\leq i\leq m; \\ \\  \sum_{\ell=1}^m X_{\ell,j} = c_j, ~\forall ~1 \leq j \leq n \end{array}\right\}. \end{equation}
Taking $X_{i,j}$ to be Bernoulli distributed with parameter $p_{i,j} / (1+p_{i,j})$, we obtain the uniform distribution over the set of $(r,c)$-binary contingency tables.  

In the case of Latin squares of order~$n$, we consider the parameterization which consists of $n \times n$ tables of values, each row of which is a permutation of elements $\{1,2,\ldots,n\}$, and each column is a permutation of elements $\{1,2,\ldots,n\}$. 
Let $U_{i,j}$ denote a collection of i.i.d.~discrete uniform random variables from the set $\{1,2,\ldots,n\}$, $ 1\leq i, j \leq n$, and let $S_n$ denote the set of all permutations of the elements $\{1,\ldots,n\}$. 
The sample space for random sampling of Latin squares is then given by Equation~\eqref{distribution} with $\L(X_{i,j}) = \L(U_{i,j})$ and measurable event $E$ given by 
\[ E = \left\{  \ \begin{array}{ll} (X_{i,1},\ldots,X_{i,n}) \in S_n, & \forall ~ 1\leq i\leq m; \\ \\  (X_{1,j},\ldots,X_{m,j}) \in S_n, & \forall ~1 \leq j \leq n \end{array}\right\}. \]

Once this probabilistic framework has been established, the next task is to devise a PDC division that is both practical and amenable to the resulting analysis. 
We motivate our PDC division in the next section.

\subsection{An heuristic for contingency tables}

We first recall a division from~\cite{PDC} which is recursive and self-similar, and serves as the inspiration for the algorithm championed in this paper. 
Let $0<x<1$, and let $Z_1(x), Z_2(x), \ldots, Z_n(x)$ denote independent geometric random variables with $Z_i \equiv Z_i(x)$ having distribution which is geometric with parameter $1-x^i$, $i=1,2,\ldots,n$. 
Conditional on $\sum_{i=1}^n i\, Z_i(x) = n$, the random variables $Z_i(x)$, $i=1,2,\ldots, n$, represent the number of parts of size~$i$ in a uniformly chosen integer partition of size~$n$; see~\cite{Fristedt, VershikKerov, Vershik}. 
Suppose, for the purpose of this demonstration, that we wish to sample from $\L\left( (Z_1(x), \ldots, Z_n(x)) \middle | \sum_{i=1}^n i\, Z_i(x) = n\right)$. 

The division championed in this case is inherently probabilistic, and relies on a property of geometric random variables. 
Namely, letting $G(q)$ and $G'(q^2)$ denote independent geometric random variables with parameters $1-q$ and $1-q^2$ respectively, and letting $B(p)$ denote an independent Bernoulli random variable with parameter $p$, we have the decomposition 
\begin{equation}\label{decomposition} G(q) \stackrel{D}{=} B\left(\frac{q}{1+q}\right) + 2G'(q^2), 
\end{equation}
where the equality is in distribution. 
In other words, this decomposition shows that the bits in a geometric random variable are independent, and specifies their distribution.
It also suggests a PDC algorithm; that is, sample the least significant bits first, one at a time, and then the remaining part has a distribution which is two times a geometric distribution with parameter $q^2$ in place of $q$. 
As a bonus, many sampling algorithms have a tilting parameter which can be chosen arbitrarily, as is the case for parameter $x$ for integer partitions; after each stage of the PDC algorithm, we may adjust the tilting parameter to more optimally target the remaining sampling problem. 

Explicitly, 
we use the division
\[ A = (\epsilon_1, \ldots, \epsilon_n), \qquad B = (2Z_1(x^2), 2Z_2(x^4), \ldots, 2 Z_n(x^{2n})), \]
where $\epsilon_i$ denotes the least significant bit of $Z_i$, distributed as a Bernoulli random variable with parameter $x^i/(1+x^i)$. 
Even though we have written the geometric random variables in $B$ in terms of a new parameter $x^2$, since the distribution we are sampling from is conditionally independent of $x$, we are free to choose any $x' \in (0,1)$ which more optimally targets the sampling distribution, which is realized after the bits in $A$ are sampled. 

An application of Algorithm~\ref{PDC procedure von Neumann} in this setting is the following: Sample from $A = (\epsilon_1, \ldots, \epsilon_n)$ and let $a = \sum_{i=1}^n i\, \epsilon_i$.  
Then we have 
\begin{equation}\label{sa} s(a) = \frac{\P(\sum_{i=1}^n (2i) Z_{2i} = n-a)}{\max_\ell \P( \sum_{i=1}^n (2i)\, Z_{2i} = \ell)} =  \frac{\P\left(\sum_{i=1}^{(n-a)/2} i Z_{i} = \frac{n-a}{2}\right)}{\max_\ell \P\left( \sum_{i=1}^{\ell} i\, Z_{i} = \ell\right)} = \frac{p\left(\frac{n-a}{2}\right)x^{(n-a)/2} \prod_{i=1}^{(n-a)/2} (1-x^i)}{\max_\ell p(\ell) x^\ell \prod_{i=1}^\ell (1-x^i)}, \end{equation}
where $p(n)$ denotes the number of integer partitions of $n$, which is an extremely well-studied sequence; see for example~\cite{PDC} and the references therein. 
In addition, we have $\L(B | (a,B) \in E)$ is equivalent to the original problem with $n$ replaced by $(n-a)/2$, hence the description of this algorithm as self-similar. 
In terms of random sampling of integer partitions, this is also within a constant factor of the fastest possible algorithm due to the efficient computation of $p(n)$; see~\cite[Theorem 3.5]{PDC}.
More importantly, however, is that this type of divide-and-conquer is inherently probabilistic, table-free, and can be applied to other structures which are modeled by geometric random variables, namely,  contingency tables. 

Next, we revisit and generalize an approach in~\cite{DeSalvoCT} for random sampling of nonnegative integer-valued $(r,c)$-contingency tables inspired by the above application for integer partitions. 
The generalization fixes an arbitrary set of entries to be 0, which doesn't change the approximation heuristic; however, it does a priori drastically change the computational complexity of computing the exact sampling rejection probability. 

We start with a well-known probabilistic model for the entries in a random contingency table, generalized to include entries which are forced to be 0.
In this section, we let $W$ be any given $m\times n$ matrix with values in $\{0,1\}$. 
For each $j=1,2,\ldots,n$, we define $J_j := \{1,\ldots, n\} \setminus \{ i : W(i,j) = 1\}$. 
Similarly, for each $i=1,2,\ldots,m$, we define $I_i := \{1,\ldots, n\} \setminus \{j : W(i,j) = 1\}$. 
Also, we let $h_j = \sum_{i=1}^m W(i,j)$ denote the number of entries forced to be zero in column $j$, for $j=1,2,\ldots,n$.

\begin{lemma}
\label{lemma:uniform}
Let $\X_W=(X_{ij})_{1 \leq i \leq m, 1 \leq j \leq n}$ denote a collection of independent geometric random variables with parameters $p_{ij}$, such that $W(i,j) = 1$ implies $X_{i,j}$ is conditioned to have value 0.   
If $p_{ij}$ has the form $p_{ij} = 1 - \alpha_i \beta_j$, then $\X_W$ is uniform restricted to $(r,c)$-contingency tables with zeros in entries indicated by $W$.
\end{lemma}
\begin{proof}
Let $\X=(X_{ij})_{1 \leq i \leq m, 1 \leq j \leq n}$ denote a collection of independent geometric random variables with parameters $p_{ij}$, where $p_{ij}$ has the form $p_{ij} = 1 - \alpha_i \beta_j$. 
We have 
\[\P\big(\X=\xi\big)
=\prod_{i,j}\P\big(X_{ij}=\xi_{ij}\big)
=\prod_{i,j}(\alpha_i \beta_j)^{\xi_{ij}}(1-\alpha_i \beta_j)
=\prod_i\alpha_i^{r_i} \prod_j\beta_j^{c_j}\prod_{i,j}\big(1-\alpha_i \beta_j\big).
\]
Since this probability does not depend on $\xi$, it follows that the restriction of $\X$ to $(r,c)$-contingency tables is uniform. 
As the collection of random variables are independent, conditioning on any $X_{i,j} = 0$ only changes the constant of proportionality, and does not affect the dependence on the $\xi$, hence 
\[\P\big(\X_W=\xi\big)
=\prod_{i,j : W(i,j) = 0}\P\big(X_{ij}=\xi_{ij}\big)
=\prod_{i,j: W(i,j)=0}(\alpha_i \beta_j)^{\xi_{ij}}(1-\alpha_i \beta_j)
=\prod_i\alpha_i^{r_i} \prod_j\beta_j^{c_j}\prod_{i,j}\big(1-\alpha_i \beta_j\big);
\]
i.e., it follows that the restriction of $\X_W$ to $(r,c)$-contingency tables with forced zero entries indicated by $W$ is uniform. 
\end{proof}

\begin{lemma}\label{expectation}
Suppose $\X_W$ is a collection of independent geometric random variables, where $X_{i,j}$ has parameter $p_{ij} = \frac{m-h_j}{m-h_j+c_j}$, for all pairs $(i,j)$ such that $W(i,j) = 0$, and $p_{i,j} = 1$ for all pairs $(i,j)$ such that $W(i,j) = 1$.  Then the expected column sums of $\X_W$ are $c_1, c_2, \ldots, c_n$, and the expected row sums are $\sum_{j\in I_1} \frac{c_j}{m-h_j}, \ldots, \sum_{j\in I_m} \frac{c_j}{m-h_j}$. 
\end{lemma}
\begin{proof}
Note first that $\E X_{i,j} = p_{i,j}^{-1} - 1$.  Then 
for any $j=1,2,\ldots,n$,
\[ \sum_{i=1}^m \E\, X_{i,j} = \sum_{i \in J_j} \frac{m-h_j+c_j}{m-h_j} - 1= c_j  \]
and similarly for any $i=1,2,\ldots, m$, 
\[ \sum_{j=1}^n \E\, X_{i,j} = \sum_{j \in I_i} \frac{m-h_j+c_j}{m-h_j}-1 = \sum_{j\in I_i} \frac{c_j}{m-h_j}. \qedhere\]
\end{proof}

Next, we describe the PDC algorithm and how the heuristic is utilized.

Suppose we have matrices $W$ and $\mathcal{O}$ with entries in $\{0,1\}$.  
For any set of row sums and column sums $(r,c)$, let $\Sigma(r,c,\mathcal{O}, W)$ denote the number of nonnegative integer-valued $(r,c)$-contingency tables with entry $(i,j)$ forced to be even if the $(i,j)$th entry of $\mathcal{O}$ is 1, and entry $(i,j)$ forced to be 0 if the $(i,j)$th entry of $W$ is 1. 
Let $\mathcal{O}_{i,j}$ denote the matrix which has entries with value $1$ in the first $j-1$ columns, and entries with value $1$ in the first $i$ rows of column $j$, and entries with value 0 otherwise. 

The algorithm we champion for contingency tables starts by considering the first entry in the first column not forced to be 0, say at entry $(s,1)$, and denote the least significant bit by $\epsilon_{s,1}$. 
As we are in the first column, we have $\L(\epsilon_{s,1} | E) = \L\left(\Bern\left(\frac{q_1}{1+q_1}\right) \middle| E\right)$, where $E$ is given in Equation~\eqref{ctE}. 
In lieu of sampling from this conditional distribution directly, and a more practical alternative is to sample from $\L\left(\Bern\left(\frac{q_1}{1+q_1}\right)\right)$ and reject the outcome with probability proportional to $1-\P(E | \epsilon_{s,1})$. 

We define $r_i(k) = (r_1, \ldots, r_i - k, \ldots r_m)$ and $c_j(k) = (c_1, \ldots, c_j - k, \ldots c_n)$ for $k \in \{0,1\}$. 
By Lemma~\ref{lemma:uniform}, we have 
\[ \P(E | \epsilon_{s,1} = k) = \Sigma(r_s(k),c_1(k),W,\mathcal{O}_{i,j})\,\cdot \, \left(q_1^{c_1-k}\, (1-q_1^2)\right) (1-q_1)^{m-h_1} \prod_{j=2}^n q_j^{c_j} (1-q_j)^{m-h_j}. \]
Then, since we reject \emph{in proportion} to this probability, we normalize by all terms which do not depend on $k$, which gives
\begin{equation}\label{enumerative}
 \P(E | \epsilon_{s,1} = k) \propto \Sigma(r_s(k),c_1(k),W,\mathcal{O}_{i,j})\,\cdot \, q_1^{-k},
\end{equation}
where we use the notation $a(k) \propto b(k)$ to mean that $a(k)/b(k)$ is equal to a positive constant independent of $k$ (but potentially depending on all other parameters). 
Once we accept a value for $\epsilon_{s,1}$, we then move to the next bit further down in the column and sample analogously, continuing in this manner column by column, left to right. 

The following is a probabilistic formulation of the \emph{exact} rejection probability for an arbitrary entry $(i,j)$, which assumes that all bits above the entry in the current column have already been sampled, as well as all bits in all columns to the left of $(i,j)$; see Section~\ref{section_probabilistic} for the complete description of each random variable:
 \begin{align}\label{fij}
\hspace{-0.25in} f(i,j,k, r, c,W)  \propto \ &
 \P\left( \begin{array}{llll}  
\sum_{\ell\in J_1}^{\ell < j} 2\xi''_{1,\ell}(q_\ell^2, c_\ell) &+ \eta_{1,j,i}'(q_j, c_j) &+  \sum_{\ell\in J_1}^{\ell > j} \xi'_{1,j}(q_\ell, c_\ell) &= r_1 \\
\sum_{\ell\in J_2}^{\ell < j} 2\xi''_{2,\ell}(q_\ell^2, c_\ell) &+ \eta_{2,j,i}'(q_j, c_j) &+  \sum_{\ell\in J_2}^{\ell > j} \xi'_{2,j}(q_\ell, c_\ell) &= r_2 \\
\qquad \vdots \\
\sum_{\ell\in J_{i-1}}^{\ell < j} 2\xi''_{i-1,\ell}(q_\ell^2, c_\ell) &+ \eta_{i-1,j,i}'(q_j, c_j) &+  \sum_{\ell\in J_{i-1}}^{\ell > j} \xi'_{i-1,j}(q_\ell, c_\ell) &= r_{i-1} \\
\sum_{\ell\in J_i}^{\ell < j}2\xi''_{i,\ell}(q_\ell^2, c_\ell) &+ \eta_{i,j,i}''(q_j, c_j) &+  \sum_{\ell\in J_i}^{\ell > j} \xi'_{i,j}(q_\ell, c_\ell) &= r_{i}-k \\
\sum_{\ell\in J_{i+1}}^{\ell < j}2\xi''_{i+1,\ell}(q_\ell^2, c_\ell) &+ \eta_{i+1,j,i}''(q_j, c_j) &+  \sum_{\ell\in J_{i+1}}^{\ell > j} \xi'_{i+1,j}(q_\ell, c_\ell) &= r_{i+1} \\
\qquad \vdots \\
\sum_{\ell\in J_m}^{\ell < j}2\xi''_{m,\ell}(q_\ell^2, c_\ell) &+ \eta_{m,j,i}''(q_j, c_j) &+  \sum_{\ell\in J_m}^{\ell > j} \xi'_{m,j}(q_\ell, c_\ell) &= r_{m} \\
\end{array}\right) \\
\nonumber & \ \ \ \  \times  \P\left( \sum_{\ell\in I_i, \ell \leq i} 2\, \xi_{\ell,j}(q_j^2) + \sum_{\ell\in I_i, \ell > i} \xi_{\ell,j}(q_\ell) = c_j - k \right). 
\end{align} 
If we could evaluate the multivariate probability above exactly, or to some arbitrarily defined precision, then the bit-by-bit sampling approach would yield an exact sampling algorithm. 
However, we champion the following approximation to Equation~\eqref{fij} based on the heuristic that the dependencies between the random variables are strongest along the $i$-th row and the $j$-th column:
\begin{align}
\nonumber & F(i,j,k, r, c,W) := \\
\label{alternative:rejection} &\qquad  \Pr\left( \sum_{\ell \in I_i, \ell < j} 2\xi''_{i,\ell}(q_\ell^2, c_j, \epsilon) + 2\eta''_{i,j,i} + \sum_{\ell \in I_i, \ell > j} \xi'_{i,\ell}(q_\ell) = r_i - k\right) \\
\nonumber &\qquad  \ \ \ \times \Pr\left( \sum_{\ell\in J_j, \ell \leq i} 2\xi_{\ell,j}(q_j^2) + \sum_{\ell \in J_j, \ell > i} \xi_{\ell,j}(q_j) = c_j - k\right), \qquad \qquad k \in \{0,1\}. 
 \end{align}
Note that while this approximation does not encode the essential global information, it nonetheless enforces several necessary conditions like parity of the row and column, and is fast to compute directly. 


\subsection{An heuristic for Latin squares}
\label{latin_square_heuristic}

The following is a variation of the motivating example of the previous section. 
Let $0<x<1$, and let $X_1(x), X_2(x), \ldots, X_n(x)$ denote independent Bernoulli random variables with $X_i \equiv X_i(x)$ having distribution which is Bernoulli with parameter $\frac{x^i}{1+x^i}$, $i=1,2,\ldots,n$. 
Conditional on $\sum_{i=1}^n i\, X_i(x) = n$, the random variables $X_i(x)$, $i=1,2,\ldots, n$, represent the number of parts of size~$i$ in a uniformly chosen integer partition of size~$n$ \emph{into distinct part sizes}; see~\cite{Fristedt, LD}. 
As before, suppose we wish to sample from $\L\left( (X_1(x), \ldots, X_n(x)) \middle | \sum_{i=1}^n i\, X_i(x) = n\right)$. 
We then consider the following ``spatial" PDC division
\[ A = (X_1, X_3, X_5, \ldots), \qquad B = (X_2, X_4, X_6, \ldots). \]
The result is again a self-similar, recursive PDC algorithm, with the rejection function given by 
\begin{equation}\label{sa} s(a) = \frac{\P(\sum_{i=1}^{n/2} (2i) X_{2i} = n-a)}{\max_\ell \P( \sum_{i=1}^{n/2} (2i)\, X_{2i} = \ell)} =  
\frac{p_d\left(\frac{n-a}{2}\right)x^{(n-a)/2} \prod_{i=1}^{(n-a)/2} (1+x^i)^{-1}}{\max_\ell p_d(\ell) x^\ell \prod_{i=1}^\ell (1+x^i)^{-1}}, \end{equation}
where $p_d(n)$ is the number of integer partitions into distinct part sizes, which can be computed efficiently via, e.g.,~\cite{hua1942number}. 

Let $\LS_n$ denote the set of Latin squares of order $n$, and let $\BC_{n,m}$ denote the set of all $n\times n$ binary contingency tables with line sums $m$.
Consider any Latin square of order~$n$, and consider the map $\varphi_n : \LS_n \longrightarrow \BC_{n,\lceil n/2\rceil}$ which replaces each entry having value $x$ with $x \mod 2$, i.e., each entry becomes a $1$ if it is odd and $0$ if it is even. 
The result is an $n\times n$ binary contingency table with row sums and column sums $\lceil n/2 \rceil$.
Let us assume as before that $n$ is a perfect power of 2 for simplicity of exposition. 

By symmetry, each entry in a Latin square is equally likely to be even or odd, subject to the condition that there are an equal number of odd and even entries in each row and column. 
It is thus natural to conjecture that the joint distribution of the parity of entries in a random Latin square of order~$n$ is close in some sense to a joint distribution of $n\times n$ i.i.d.~Bernoulli random variables with parameter $1/2$, subject to an equal number of 0s and 1s in each row and column; i.e., a uniform distribution on the corresponding set of binary contingency tables $\BC_{n,n/2}$. 
In fact, the standard approach to sampling from binary contingency tables, given in~\cite{chen}, does in fact model each entry as independent subject to the constraints, with entry $(i,j)$ as having a Bernoulli distribution with parameter $\frac{c_j}{m}$, where $c_j$ is the column sum and $m$ is the number of rows; taking $c_j = n/2$ and $m=n$, we obtain our aforementioned natural heuristic for the parity of entries in a Latin square of order $n$. 

We may continue this reasoning for tables with prescribed entries forced to be zero, which is the inspiration for Algorithm~\ref{latin:square:algorithm}.  
We similarly have analogous lemmas as in the previous section specifically for binary-valued tables, which are also straightforward generalizations from the work in~\cite{chen}, and so we omit the proof. 
\begin{lemma}
\label{lemma_uniform_binary}
Let $\X_W=(X_{ij})_{1 \leq i \leq m, 1 \leq j \leq n}$ denote a collection of independent Bernoulli random variables with parameters $p_{ij}$, such that $W(i,j) = 1$ implies $X_{i,j}$ is conditioned to have value 0.   
If $p_{ij}$ has the form $p_{ij} = 1 - \alpha_i \beta_j$, then $\X_W$ is uniform restricted to $(r,c)$-binary contingency tables with zeros in entries indicated by $W$.
\end{lemma}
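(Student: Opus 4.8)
The plan is to follow the proof of Lemma~\ref{lemma:uniform} essentially verbatim, replacing the geometric mass function by its two-point Bernoulli counterpart; the argument there rested only on the product form of the joint law, so the same collapse onto the margins will occur here and is precisely why the author can omit it.

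First I would record the unconditioned joint law of the independent collection $\X=(X_{ij})$. The parametrization to use is the one for which the odds of each coordinate factor across rows and columns, namely $\P(X_{ij}=1)/\P(X_{ij}=0)=\alpha_i\beta_j$, so that
\[ \P(X_{ij}=\xi_{ij})=\frac{(\alpha_i\beta_j)^{\xi_{ij}}}{1+\alpha_i\beta_j},\qquad \xi_{ij}\in\{0,1\}. \]
Here $p_{ij}$ plays the role of the geometric success parameter, with $q=\alpha_i\beta_j=1-p_{ij}$; this is exactly the Bernoulli distribution appearing as the least significant bit in the decomposition~\eqref{decomposition}, and it is the two-point analogue of the geometric law $(\alpha_i\beta_j)^{\xi_{ij}}(1-\alpha_i\beta_j)$ used in Lemma~\ref{lemma:uniform}.

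Next I would compute, for any $\xi\in\{0,1\}^{m\times n}$ respecting the forced zeros (i.e.\ $\xi_{ij}=0$ whenever $W(i,j)=1$),
\[ \P(\X_W=\xi)=\prod_{i,j:\,W(i,j)=0}\frac{(\alpha_i\beta_j)^{\xi_{ij}}}{1+\alpha_i\beta_j}=\frac{\prod_i\alpha_i^{r_i}\prod_j\beta_j^{c_j}}{\prod_{i,j:\,W(i,j)=0}(1+\alpha_i\beta_j)}, \]
using that $\sum_{j:\,W(i,j)=0}\xi_{ij}=r_i$ and $\sum_{i:\,W(i,j)=0}\xi_{ij}=c_j$, since the forced-zero entries contribute a factor $1$. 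The denominator is a normalizing constant independent of $\xi$, and the numerator depends on $\xi$ only through the fixed margins $r$ and $c$; hence every binary table with margins $(r,c)$ and zeros prescribed by $W$ receives the same probability. Conditioning this independent collection on the event that the row sums equal $r$ and the column sums equal $c$ therefore yields the uniform distribution on that set, which is the claim. Exactly as in Lemma~\ref{lemma:uniform}, the forced-zero conditioning $X_{ij}=0$ for $W(i,j)=1$ only rescales the constant of proportionality and leaves the $\xi$-dependence untouched.

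There is essentially no obstacle beyond bookkeeping; the one point that must be \emph{checked} rather than assumed is that the rank-one odds structure $\P(X_{ij}=1)\propto\alpha_i\beta_j$ is exactly what makes the per-entry factors collapse into $\prod_i\alpha_i^{r_i}\prod_j\beta_j^{c_j}$, so that the $\xi$-dependence survives only through the margins. A naive affine choice such as $\P(X_{ij}=1)=1-\alpha_i\beta_j$ would instead leave a residual product $\prod(\tfrac{1-\alpha_i\beta_j}{\alpha_i\beta_j})^{\xi_{ij}}$ that does not factor over rows and columns, so the separability of the odds into $\alpha_i$ and $\beta_j$ is the crux. Everything else transfers directly from the geometric case.
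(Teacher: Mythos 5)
Your proof is correct, and it is exactly the ``straightforward generalization'' of Lemma~\ref{lemma:uniform} that the paper has in mind when it omits this proof: the product of per-entry masses collapses to $\prod_i\alpha_i^{r_i}\prod_j\beta_j^{c_j}$ times a constant, so the joint law depends on $\xi$ only through the margins, and conditioning on the forced zeros only changes the normalizing constant. Your closing caveat, however, is more than bookkeeping, and you are right to flag it: under the paper's own convention that a Bernoulli ``parameter'' is the success probability (the convention forced by Lemma~\ref{expectation_binary}, where the expected column sums are $\sum_i p_{ij}$), the hypothesis $p_{ij}=1-\alpha_i\beta_j$ read literally does \emph{not} give uniformity. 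Already for $2\times 2$ tables with all margins equal to $1$, the two permutation matrices receive unequal mass unless the odds $(1-\alpha_i\beta_j)/(\alpha_i\beta_j)$ happen to factor into a row term times a column term, which they do not in general (take $\alpha_1\beta_1=\tfrac12$, $\alpha_1\beta_2=\tfrac13$, $\alpha_2\beta_1=\tfrac14$). The hypothesis that makes the lemma true --- and the one your proof actually uses --- is that the \emph{odds} have rank-one form, i.e.\ $p_{ij}=\alpha_i\beta_j/(1+\alpha_i\beta_j)$; this is precisely the law of the least significant bit in the decomposition~\eqref{decomposition}, and it is consistent with the Bernoulli parametrization of the form $x/(1+x)$ invoked after Equation~\eqref{ctE}. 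So you have proved the lemma as it must be intended; the wording $p_{ij}=1-\alpha_i\beta_j$ in the statement is a too-literal transcription of the geometric case, and your observation is the correct repair.
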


\begin{lemma}\label{expectation_binary}
Suppose $\X_W$ is a collection of independent Bernoulli random variables, where $X_{i,j}$ has parameter $p_{ij} = \frac{c_j}{m-h_j}$, for all pairs $(i,j)$ such that $W(i,j) = 0$, and $p_{i,j} = 1$ for all pairs $(i,j)$ such that $W(i,j) = 1$.  Then the expected column sums of $\X_W$ are $c_1, c_2, \ldots, c_n$, and the expected row sums are $\sum_{j\in I_1} \frac{c_j}{m-h_j}, \ldots, \sum_{j\in I_m} \frac{c_j}{m-h_j}$. 
\end{lemma}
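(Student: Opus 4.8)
The plan is to mirror the proof of Lemma~\ref{expectation} for the geometric case, changing only the mean formula: for a Bernoulli variable $X$ with success parameter $p$ one has $\E X = p$, in place of the geometric mean $p^{-1}-1$. This is why the parameters are chosen here as $p_{ij} = \frac{c_j}{m-h_j}$ directly, so that each unconstrained entry already satisfies $\E X_{ij} = \frac{c_j}{m-h_j}$; implicitly this uses $c_j \le m - h_j$, which holds whenever $(r,c)$ admits a binary table with the zeros prescribed by $W$. Following Lemma~\ref{lemma_uniform_binary}, the entries with $W(i,j)=1$ are conditioned to equal $0$, so $\E X_{ij}=0$ for each such pair; I would state this explicitly at the start, since a literal reading of ``$p_{ij}=1$'' would force those entries to $1$ and destroy the line sums. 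The upshot is the single identity $\E X_{ij} = \frac{c_j}{m-h_j}\,\I\{W(i,j)=0\}$, from which both conclusions follow by linearity of expectation.

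First I would compute the column sums: fixing $j$ and summing over $i$, only the entries with $W(i,j)=0$ contribute, and by definition these are indexed by $J_j$ with $|J_j| = m - h_j$, since $h_j = \sum_i W(i,j)$ counts precisely the forced zeros in column $j$. Hence $\sum_{i=1}^m \E X_{ij} = \sum_{i\in J_j}\frac{c_j}{m-h_j} = (m-h_j)\cdot\frac{c_j}{m-h_j} = c_j$. Next I would compute the row sums: fixing $i$ and summing over $j$, the unconstrained entries are indexed by $I_i$, giving $\sum_{j=1}^n \E X_{ij} = \sum_{j\in I_i}\frac{c_j}{m-h_j}$, which is exactly the asserted value; here, unlike the column case, the summands vary with $j$ and so there is no further collapse.

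I expect no genuine obstacle: once the conventions are fixed the result is pure linearity of expectation. The only points demanding care are bookkeeping ones --- confirming that it is the conditioning built into $\X_W$, rather than the nominal parameter, that zeroes out the $W(i,j)=1$ entries, and recognizing that the cardinality identity $|J_j| = m - h_j$ is exactly what collapses the column sum to $c_j$.
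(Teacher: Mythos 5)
Your proof is correct and takes exactly the approach the paper intends: the paper omits the proof of this lemma, deferring to the ``analogous'' geometric case, and your argument is precisely the proof of Lemma~\ref{expectation} with the geometric mean $p_{i,j}^{-1}-1$ replaced by the Bernoulli mean $p_{i,j}$, using $|J_j| = m-h_j$ to collapse the column sum and linearity for the row sum. Your insistence on reading the clause ``$p_{i,j}=1$ when $W(i,j)=1$'' as conditioning those entries to equal $0$ (as in Lemma~\ref{lemma_uniform_binary}) is a legitimate and necessary correction: in the geometric convention a parameter of $1$ does force the value $0$ almost surely, but for a Bernoulli variable it would force the value $1$ and break the column sums, so the statement's wording is an artifact of transcribing the geometric lemma.
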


There are thus two distinct places where bias can occur in our approach championed in Section~\ref{novel_algorithm}. 
The first comes from the locations of the even/odd entries; even if we suppose that we may sample from $\BC_{n,n/2}$ uniformly at random, there is no a priori reason to believe that $\varphi_n$ maps the same number of Latin squares of order $n$ to each element of $\BC_{n,n/2}$, and indeed this is easily seen to \emph{not} be the case for $n=5$, even though the distribution is close to uniform. 
However, supposing we did know precisely the multiset of counts for the image of $\varphi_n$, a natural sampling approach would be to generate samples from $\BC_{n,n/2}$ uniformly and apply an appropriate rejection proportional to this multiset of counts (followed by the analogous procedure for the follow-up tables in the algorithm); however, sampling uniformly from $\BC_{n,n/2}$, is not always straightforward or efficient, which motivates the following approximation algorithm. 

In~\cite{chen}, the algorithm championed for binary contingency tables keeps track of an importance sampling weight, which represents the bias introduced by the sampling algorithm.  
The algorithm proceeds in the same manner as the one we champion for contingency tables, sampling entries column by column from top to bottom. 
Our suggested modification for binary contingency tables is to ignore the use of the Poisson-binomial distribution altogether, but still apply a rejection which captures several necessary conditions, along with what we suspect is the dominant source of bias in the sampling algorithm. 
That is, letting $X_{i,j}(p_{i,j})$ denote a Bernoulli random variable with parameter $p_j \equiv p_{i,j} = \frac{c_j}{n}$, we propose a rejection probability of the form 
\begin{align}
\label{bct_rejection}
B(i,j,k, r, c,W) & :=  \Pr\left(\sum_{\ell \in I_i, \ell > j} X_{i,\ell}(p_\ell) = r_i - k\right) \times \Pr\left(\sum_{\ell \in J_j, \ell > i} X_{\ell,j}(p_j) = c_j - k\right), \qquad  k \in \{0,1\}. 
 \end{align}
It is plausible that for the random sampling of Latin squares the two sources of bias at times cancel each other out, or potentially exacerbate the individual biases; it would certainly be interesting to explore this in more detail.  

\begin{remark}
One can also ask the following fundamental question: Is the range of the injection $\varphi_n$ equal to $\BC_{n,\lceil n/2\rceil}$ for all $n$?
In other words, do there exist binary contingency tables which, if generated, doom our approach from the start? 
The unfortunate answer is yes, found by exhaustive search for $n=7$, although in practice we have not found there to be very many. 
The next natural question, of interest in complexity theory, would be if there are necessary and sufficient conditions on $\BC_{n,\lceil n/2\rceil}$ which indicate in advance whether such a configuration of even/odd entries cannot be completed. 
One could also ask similar questions pertaining to the later phases of the algorithm. 
\end{remark}

\section{Algorithms}
\label{algorithms}

\subsection{Nonnegative integer-valued contingency tables with prescribed 0 entries}

\ignore{
Suppose we have matrices $W$ and $\mathcal{O}$ with entries in $\{0,1\}$.  
For any set of row sums and column sums $(r,c)$, let $\Sigma(r,c,\mathcal{O}, W)$ denote the number of nonnegative integer-valued $(r,c)$-contingency tables with entry $(i,j)$ forced to be even if the $(i,j)$th entry of $\mathcal{O}$ is 1, and entry $(i,j)$ forced to be 0 if the $(i,j)$th entry of $W$ is 1. 
Let $\mathcal{O}_{i,j}$ denote the matrix which has entries with value $1$ in the first $j-1$ columns, and entries with value $1$ in the first $i$ rows of column $j$, and entries with value 0 otherwise. 
Let $\Sigma(r, c, \mathcal{O}, W; t_1, \ldots, t_r)$, where each $t_\ell$, $\ell = 1, \ldots, r$ denotes an entry in the table, say $t_\ell = (i_\ell, j_\ell)$, denote the number of nonnegative integer-valued $(r,c)$-contingency tables as before with the additional assumption that the entries $t_1, \ldots t_r$ are forced to be even. 
}
\ignore{
Let $J_j \leftarrow \{1,\ldots, n\} \setminus \{ i : W(i,j) = 1\}$, and let $k_j$ denote the last entry in $J_j$, for all $j=1,2,\ldots,n$. 
Let $I_i \leftarrow \{1,\ldots, n\} \setminus \{j : W(i,j) = 1\}$, and let $\ell_i$ denote the last entry in $I_i$, for all $i=1,2,\ldots,m$. 

For each $1 \leq i \leq m, 1 \leq j \leq n$, define
\begin{equation}\label{rejection:ij}\small
 f(i,j,k,r,c) :=  \frac{\Sigma\left( \begin{array}{l}(\ldots,r_{i}-k,\ldots), \\ (\ldots,c_{j}-k,\ldots), \\ W\end{array} \right)}{\Sigma\left( \begin{array}{l}(\ldots,r_{i}-1,\ldots), \\ (\ldots,c_{j}-1,\ldots), \\ W\end{array} \right) + \Sigma\left( \begin{array}{l}(\ldots,r_{i},\ldots), \\ (\ldots,c_{j},\ldots), \\ W\end{array} \right)}
\end{equation}
if $i \in J_j \setminus \{k_j\}$, and $j \in I_i \setminus \{\ell_i\}$. 

Let $g_i := \sum_{j=1}^n W_{i,j}$ denote the number of non-zero elements row $i$ of $W$, for $i=1,\ldots,m$, and let $h_j := \sum_{i=1}^m W_{i,j}$ denote the number of non-zero elements in column $j$, $j=1,2,\ldots,n$.
Let $q_j := \frac{c_j}{m-h_j+c_j}$, and 
let $y_j := q_j^{-1} = 1+\frac{m-h_j}{c_j}$.  
Also, let $k_j'$ denote the penultimate entry in $J_j$, for all $j=1,2,\ldots,n$, and let $\ell_i'$ denote the penultimate entry in $I_i$, for all $i=1,2,\ldots,m$. 

Let $b(k)$ be such that $r_i-k-b(k)$ is even.  
Then we define for each $j =1,2,\ldots,n$ and $i \in J_j \setminus \{k_j\}$, 
\begin{equation} \label{equation:in} f(i,\ell_i',k,r,c) := \end{equation}
\begin{equation*}\small \frac{\Sigma\left( \begin{array}{l}(\ldots,r_{i}-k-b(k),\ldots), \\ (\ldots,c_{\ell_i'}-k,c_{\ell_i}-b(k)), \\ W\end{array} \right)\,\cdot\, y_{\ell_i}^{b(k)}}{\Sigma\left( \begin{array}{l}(\ldots,r_{i}-1-b(1),\ldots), \\ (\ldots,c_{\ell_i'}-1,c_{\ell_i}-b(1)), \\ W\end{array} \right)\,\cdot\, y_{\ell_i}^{b(1)}+\Sigma\left( \begin{array}{l}(\ldots,r_{i}-b(0),\ldots), \\ (\ldots,c_{\ell_i'},c_{\ell_i}-b(0)), \\ W\end{array} \right)\,\cdot\, y_{\ell_i}^{b(0)}}
\end{equation*}

Let $v(k)$ be such that $c_j - k-v(k)$ is even. 
Similarly we define for each $i=1,2,\ldots,m$ and $j \in I_i \setminus \{\ell_i\}$, 
\begin{equation}\small \label{rejection:mj}
f(k_j',j,k,r,c) := 
\end{equation}
\begin{equation*}
 \frac{\Sigma\left( \begin{array}{l} (\ldots,r_{k_j'}-k,r_{k_j}-v(k)), \\ (\ldots,c_{j}-k-v(k),\ldots), \\ W\end{array}\right)\,\cdot\, y_{j}^{v(k)}}{\Sigma\left( \begin{array}{l} (\ldots,r_{k_j'}-1,r_{k_j}-v(1)), \\ (\ldots,c_{j}-1-v(1),\ldots), \\ W\end{array}\right)\,\cdot\, y_{j}^{v(1)}+\Sigma\left( \begin{array}{l} (\ldots,r_{k_j'},r_{k_j}-v(0)), \\ (\ldots,c_{j}-v(0),\ldots), \\ W\end{array}\right)\,\cdot\, y_{j}^{v(0)}}. 
\end{equation*}

Finally, let $v(k)$ be such that $c_{n-1} - k-v(k)$ is even, let $\gamma(k)$ be such that $r_{m-1}-k-\gamma(k)$ is even, and let $b(k)$ be such that $c_n-\gamma-b(k)$ is even.  Then we define
\begin{equation*}
A \equiv \Sigma\left(\begin{array}{l}(\ldots,r_{m-1}-k-\gamma(k),r_m-v(k)-b(k)), \\ (\ldots,c_{n-1}-k-v(k),c_n-\gamma(k)-b(k)), \\ \mathcal{O}_{m,n}\end{array} \right)\,\cdot \, y_{n-1}^{v(k)}\, y_n^{\gamma(k) + b(k)},
\end{equation*}
\begin{equation*}
B \equiv \Sigma\left(\begin{array}{l}(\ldots,r_{m-1}-1-\gamma(1),r_m-v(1)-b(1)), \\ (\ldots,c_{n-1}-k-v(1),c_n-\gamma(1)-b(1)), \\ \mathcal{O}_{m,n}\end{array} \right)\,\cdot \, y_{n-1}^{v(1)}\, y_n^{\gamma(1) + b(1)},
\end{equation*}
\begin{equation*}
C \equiv \Sigma\left(\begin{array}{l}(\ldots,r_{m-1}-\gamma(0),r_m-v(0)-b(0)), \\ (\ldots,c_{n-1}-v(0),c_n-\gamma(0)-b(0)), \\ \mathcal{O}_{m,n}\end{array} \right)\,\cdot \, y_{n-1}^{v(0)}\, y_n^{\gamma(0) + b(0)},
\end{equation*}
\begin{equation}\label{equation:mn} f(m-1,n-1,k,r,c) :=  \frac{A}{B+C}. \end{equation}

and let $\ldots$.
}

For any $ 1 \leq i \leq m$, $1 \leq j \leq n$, recall the definition of $\Sigma(r, c, \mathcal{O}_{i,j}, W)$ from the previous section. 
For $k=0,1$, let $r_i(k) = (r_1,\ldots,r_i-k,\ldots,r_m)$, and $c_j(k) = (c_1,\ldots, c_j-k, \ldots,c_n)$. 
Then define
\begin{align*}
 A_{i,j}(k) & := \Sigma\left(r_i(k), c_j(k), \mathcal{O}_{i,j}, W\right), 
\end{align*}
\begin{equation}\label{equation:mn} g(i,j,k,r,c,W) :=  \frac{A_{i,j}(k)}{A_{i,j}(0)+A_{i,j}(1)}, \qquad k \in \{0,1\}. \end{equation}

 The following notations and definitions from~\cite{DeSalvoCT} are used in the algorithms. 
 
 \begin{tabular}{p{1.9cm}p{13cm}}
$\Geo(q)$ & Geometric distribution with probability of success $1-q$, for $0<q<1$, with $\Pr\left(\Geo(q)=k\right)=(1-q)q^k$, $k=0,1,2,\ldots\, $. \\[.15cm]
$\NB(m, q)$ & Negative binomial distribution with parameters $m$ and $1-q$, given by the sum of $m$ independent $\Geo(q)$ random variables, with 
\[\Pr(\NB(m,q) = k) = \binom{m+k-1}{k}(1-q)^m q^k.\] \\[.15cm]
U 
& Uniform distribution on $[0,1]$. We will also denote random variables from this distribution as $U$ or $u$; should be considered independent of all other random variables, including other instances of $U$. \\[.15cm]
$\Bern(p)$ & Bernoulli distribution with probability of success $p$. Similarly to $U$, we will also use it as a random variable. \\[.15cm]
$\xi_{i,j}(q)$ & $\Geo(q)$ random variables which are independent for distinct pairs $(i,j)$, $1\le i\le m$, $1\le j\le n$. \\[.15cm]
$\xi_{i,j}'(q,c_j,W)$ & Random variables which have distribution \[\L\left(\xi_{i,j}(q) \bigg| \sum_{\ell\in J_j} \xi_{\ell,j}(q) = c_j\right),\] and are independent of all other random variables $\xi_{i,\ell}(q)$ for $\ell \ne j$. \\[.15cm]
$2\xi_{i,j}''(q,c_j,W)$ & Random variables which have distribution \[\L\left(2\xi_{i,j}(q^2) \bigg| \sum_{\ell\in J_j} 2\xi_{\ell,j}(q^2) = c_j \right)\] 
and are independent of all other random variables $\xi_{i,\ell}(q)$ for $\ell \ne j$. \\[.15cm]
$\eta'_{i,j,s}(q, c_j)$ & Random variables which have distribution 
\[ \L\left( \xi_{i,j}(q) \middle| \sum_{\ell\in J_j, \ell \leq s} 2\xi_{\ell,j}(q^2) + \sum_{\ell\in J_j, \ell > s} \xi_{\ell,j}(q) = c_j  \right), \]
and are independent of all other random variables $\xi_{i,\ell}(q)$ for $\ell \ne j$. \\[.15cm]
\end{tabular}

 \begin{tabular}{p{1.9cm}p{13cm}}
$2\eta''_{i,j,s}(q, c_j)$ & Random variables which have distribution 
\[ \L\left( 2\xi_{i,j}(q^2) \middle| \sum_{\ell\in J_j, \ell \leq s} 2\xi_{\ell,j}(q^2) + \sum_{\ell\in J_j, \ell > s} \xi_{\ell,j}(q) = c_j  \right), \]
and are independent of all other random variables $\xi_{i,\ell}(q)$ for $\ell \ne j$. \\[.15cm]
$\mathbf q$ & The vector $(q_1, \ldots, q_n)$, where $0<q_i<1$ for all $i=1,\ldots,n$. \\[.15cm]
\end{tabular}

Consider next the following subroutines to generate a random bit:
\begin{algorithm}[H]
\label{bit_exact}
\caption*{Exact sampling of least significant bit of entry $(i,j)$ via enumeration of contingency tables}
\begin{algorithmic}[1]
\Procedure{Exact}{$i,j,r,c,A,W_b$}
  	  \If{ $u \leq g(i,j,0,r,c,W_b) $}\label{reject_g}
	       \State {\bf return} $0$
	  \Else
	       \State {\bf return} $1$
	  \EndIf
\EndProcedure 
\end{algorithmic}
\end{algorithm}

\begin{algorithm}[H]
\label{bit_exact_prob}
\caption*{Exact sampling of least significant bit of entry $(i,j)$ via exact rejection sampling}
\begin{algorithmic}[1]
\Procedure{Exact\_Probabilistic}{$i,j,r,c,A,W_b$}
  	  \If{ $u \leq f(i,j,0,r,c,W_b) $}\label{reject_f}
	       \State {\bf return} $0$
	  \Else
	       \State {\bf return} $1$
	  \EndIf
\ignore{\State $X(i,j) \leftarrow \mbox{Bern}(c_j / (m-h_j))$ \label{random_step}
\If{$U >  \frac{f(i,j,X(i,j),r,c,W_b)}{\max_{\ell \in \{0,1\}f(i,j,\ell,r,c,A,W_b)}}$}
	\State Goto Line~\ref{random_step}.
\EndIf
\State {\bf return} $X_{i,j}$
}
\EndProcedure 
\end{algorithmic}
\end{algorithm}

The previous two procedures are computationally intensive to evaluate, since they a priori rely on the entire solution set, i.e., the set of all potential values of entries. 
We suggest the following approximation approach, in which the $(i,j)$-th entry is decided based solely on the row sum $r_i$ and column sum $c_j$. 
For each fixed $1 \leq i \leq m$ and $ 1 \leq j \leq n$, let $t_{1}, \ldots, t_{\nu}$ denote the set of entries and the corresponding value which have a deterministic value (either the entire entry or the least significant bit) determined by the line sum conditions after the sampling of the least significant bit of entry $(i,j)$, and let $r_{i,j}(t_1,\ldots, t_\nu)$ and $c_{i,j}(t_1, \ldots, t_\nu)$ denote the residual row and column sums, respectively, given these entries.  
For example, if there is only one fillable entry in a given row or column, we may uniquely specify its value; 
if a row sum or column sum is zero, set all fillable entries in that row or column to zero; if there are two entries left in a given row and the row sum is even (and non-zero), then either both least significant bits are 0 or both are 1. 
\ignore{
\begin{enumerate}
\item Fill in a set of values $\{t_1, \ldots, t_\nu\}$ into the table.
\item Check if a priori the entries of any rows or columns are uniquely determined given these values.  There are two cases: if a row sum is zero then all remaining entries must be 0, or if a row sum equals the number of undecided entries then they must all be 1.   Similarly with columns.  
\item If any values were filled in from the previous step, call the function recursively; stop when no new elements are filled in, or when $W$ is the matrix of all 1s.
\end{enumerate}}

\begin{algorithm}[H]
\label{bit_approx}
\caption*{Given entries with values to fill in, return the set of determined entries.}
\begin{algorithmic}[1]
\Procedure{Deterministic\_Fill}{$\{t_1, \ldots t_\nu\}, r, c, A, W$}
\State Fill in the values $\{t_1, \ldots, t_\nu\}$ into the table $A$, update $r$, $c$, and $W$ accordingly. 
\State For each row, fill in $A$ any uniquely determined values based on the row sum.  
\State For each column, fill in $A$ any uniquely determined values based on the column sum. 
\State If any values were filled in from the previous two steps, label them as $\{t_{\nu+1}, \ldots, t_d\}$ and update the values of $r$, $c$, $A$, and $W$ accordingly. 
\If {no new elements are filled in, i.e., $d=0$, or if $W$ is the matrix of all 1s}
	\State {\bf return} $(\{t_1, \ldots, t_\nu, \ldots, t_d\}, r, c, A, W$)
\Else
	\State {\bf return} {\tt Deterministic\_Fill}$(\{t_1,\ldots,t_d\}, r, c, A, W)$
\EndIf
\EndProcedure
\end{algorithmic}
\end{algorithm}

The rejection procedure we thus champion for approximate sampling of contingency tables is as follows. 
For each entry $(i,j)$, we calculate the set of all values which would be determined if we set that particular entry's bit equal to 0 or 1 using Procedure~{\tt Deterministic\_Fill}. 
We then sample the bit in proportion to the probability that all of the determined bits were also generated with their uniquely determined values, and apply a rejection probability similar to Equation~\eqref{fij}, except we condition on these determined values.
This is summarized in Procedure~{\tt Approximate\_Probabilistic} below. 

\ignore{
We define
\begin{align}
\nonumber & F(i,j,k, r, c,W, \{t_1,\ldots,t_\nu\}) := \\
\label{alternative:rejection} &\qquad  \Pr\left( \sum_{\ell \in I_i, \ell < j} 2\xi''_{i,\ell}(q_\ell^2, c_j, \epsilon) + 2\eta''_{i,j,i} + \sum_{\ell \in I_i, \ell > j} \xi'_{i,\ell}(q_\ell) = r_i - k \middle| \{t_1, \ldots, t_\nu\}\right) \\
\nonumber &\qquad  \ \ \ \times \Pr\left( \sum_{\ell\in J_j, \ell \leq i} 2\xi_{\ell,j}(q_j^2) + \sum_{\ell \in J_j, \ell > i} \xi_{\ell,j}(q_j) = c_j - k \middle| \{t_1 \ldots t_\nu\} \right), \qquad \qquad k \in \{0,1\}. 
 \end{align} 
Note that in general we cannot be more specific in our notation, since $t_1$ might contain a deterministic bit, or an entire entry. 
In each case it is straightforward to rewrite the explicit probability. 
}

\begin{algorithm}[H]
\label{bit_approx}
\caption*{Approximate sampling of least significant bit of entry $(i,j)$ via rejection sampling}
\begin{algorithmic}[1]
\Procedure{Approximate\_Probabilistic}{$i,j,r,c,A,W_b$}
\State $({\bf s}_0,r_0,c_0,A_0,W_0) \leftarrow$ {\tt Deterministic\_Fill}$(\{(i,j,0)\}, r, c, A, W)$
\State $({\bf s}_1, r_1,c_1,A_1,W_1) \leftarrow$ {\tt Deterministic\_Fill}$(\{(i,j,1)\}, r, c, A, W)$
\State $p_0 \leftarrow \prod_{\ell} \P(X_{i_\ell^0, j_\ell^0} = k_\ell^0) \times F(i,j,0,r_0,c_0,W_0)$
\State $p_1 \leftarrow \prod_{\ell} \P(X_{i_\ell^1, j_\ell^1} = k_\ell^1) \times F(i,j,0,r_1,c_1,W_1)$
\If{$u \leq \frac{p_0}{p_0+p_1}$}
	       \State {\bf return} $0$
	  \Else
	       \State {\bf return} $1$
	  \EndIf
\EndProcedure 
\end{algorithmic}
\end{algorithm}

Note that the $0$ in the term $F(i,j,0,r_1,c_1,W_1)$ above is \emph{not} a typo, since the row sums and column sums are assumed to be updated in the {\tt Deterministic\_Fill} algorithm. 
In addition, we did not specify whether the random variables $X_{i,j}$ refer to the entire entry or the least significant bit, since they reflect the information contained in the $t_\ell$, $\ell=1,\ldots,\nu$. 
In principle, the $X_{i,j}$ could refer to any statistic related to entry $(i,j)$, though for us we only consider the value of the entry $(i,j)$ or its least significant bit. 

\begin{lemma}\label{deterministic_fill_lemma}
The time to evaluate Procedure {\tt Deterministic\_Fill} is $O(n^2+m^2)$. 
\end{lemma}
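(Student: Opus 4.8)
The plan is to show that, with suitable incremental bookkeeping, {\tt Deterministic\_Fill} performs at most $O(m+n)$ recursive passes, each costing $O(m+n)$, together with a total of $O(mn)$ work spent actually writing entries, and that all three of these quantities are $O(m^2+n^2)$. First I would augment the procedure with counters maintained incrementally: the residual row sums $r_i$ and column sums $c_j$ (already present), the number $a_i$ of still-undecided entries in row $i$ and the number $b_j$ of undecided entries in column $j$, and a single global counter for the number of decided entries, so that the terminal test ``$W$ is the matrix of all $1$s'' is an $O(1)$ check. Setting up these counters costs $O(mn)$, and each single entry write updates them in $O(1)$. With them in place, testing whether a row (resp.\ column) is forced---namely $r_i=0$, or $a_i=1$, or, in the binary case, $r_i=a_i$, and the analogous column conditions---is an $O(1)$ lookup, so a full scan over all rows and columns costs $O(m+n)$ rather than $O(mn)$.

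The key structural observation is that \emph{whenever the row pass writes a value into row $i$, that row becomes completely determined}, and symmetrically for columns. Each determination rule actually used leaves the affected line with no undecided entries: a line sum of $0$ forces every undecided entry to $0$; a unique undecided entry is forced to equal the residual sum; and, in the binary case, a residual sum equal to the undecided count forces every undecided entry to $1$. Consequently a given row can be ``triggered'' at most once over the entire execution, and likewise each column, so the total number of trigger events is at most $m+n$.

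Using this, I would bound the number of recursive calls. A call recurses only if it writes at least one new entry during its row or column pass (otherwise no new elements are filled and it returns), and any such write triggers a row or a column; since there are at most $m+n$ trigger events in total, there are at most $m+n+1$ recursive calls. Each call spends $O(m+n)$ on its scan, for $O((m+n)^2)$ overall; locating and writing the undecided entries of a triggered line takes time proportional to the length of that line, but since each line is triggered at most once this contributes only $O(mn)$ in aggregate; and the counter updates add $O(mn)$ more. Since $2mn\le m^2+n^2$ and $(m+n)^2\le 2(m^2+n^2)$, the total is $O(m^2+n^2)$.

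The main obstacle, and the step deserving the most care, is the structural observation of the second paragraph: a naive reading of the procedure suggests both that each pass might rescan the entire $m\times n$ table and that there could be on the order of $mn$ passes, which would yield a bound cubic in $n$ in the Latin-square case $m=n$. The argument hinges on verifying that every filling rule actually used leaves no undecided entry in the affected line, so that each line is filled at most once and only $O(m+n)$ passes can occur. I would check this rule by rule, taking care that the disjunctive relation mentioned in the text---two undecided entries forced to share a common parity---is \emph{not} counted as a determination, since it assigns no deterministic value to either entry and hence is never written by {\tt Deterministic\_Fill}.
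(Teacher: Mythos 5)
Your proposal is correct, and it is in fact tighter than the paper's own argument. Both proofs rest on the same key structural observation: every filling rule that writes into a line (residual line sum zero, a single undecided entry, or, in the binary case, residual sum equal to the number of undecided entries) completes that line, so each recursive call completes at least one full row or column and the number of calls is $O(m+n)$ --- the paper states this as ``at worst, we fill in one entire row or column at each recursive call,'' invoking the recursion at most $\max(n,m)$ times. Where you diverge is the per-call cost. The paper charges $O(m\,n)$ per call, because as written the procedure rescans every entry of the table on each pass; multiplying $O(m\,n)$ by $O(\max(m,n))$ calls gives $O(m\,n\,\max(m,n))$, which for $m=n$ is cubic, not the $O(m^2+n^2)$ claimed --- so the paper's proof, read literally, establishes only this weaker bound. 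Your incremental bookkeeping (maintaining residual sums and undecided-entry counts per line so that each forced-line test is $O(1)$ and each pass costs $O(m+n)$), combined with the once-per-line aggregate $O(m\,n)$ write cost and the inequalities $(m+n)^2 \le 2(m^2+n^2)$ and $2mn \le m^2+n^2$, is exactly what is needed to close that arithmetic gap. Your care in excluding the parity relation (``both least significant bits equal'') from the determination rules is also warranted: it assigns no deterministic value, so counting it as a write would break the once-per-line argument. In short, same skeleton, but your accounting is the one that actually proves the stated bound; the paper's version only does so if one implicitly assumes an implementation with the kind of bookkeeping you make explicit.
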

\begin{proof}
Each time we call the function, it iterates through all elements of the table, say twice, at cost $O(m\,n)$. 
At worst, we fill in one entire row or column at each recursive call of the function. 
Since we thus invoke the recursion at most $\max(n,m)$ times, the result follows. 
\end{proof}

\ignore{
\begin{algorithm}[H]
\label{bit_approx}
\caption*{Given entries with values to fill in, return the set of determined entries.  Here $t_\ell = (i_\ell, j_\ell, k_\ell)$ is a triple consisting of a row number, column number, and bit value, respectively. }
\begin{algorithmic}[1]
\Procedure{Deterministic\_Fill}{$\{t_1, \ldots t_\nu\}, r', c', W'$}
\Statex \ \quad \textit{Initialize local variables}
\State $s \leftarrow \{t_1, \ldots, t_\nu \}$, $r \leftarrow r'$, $c \leftarrow c'$, $W \leftarrow W'$
\Statex \qquad \quad \textit{Fill in the proposed values into the table}
\For{$\ell=1,2,\ldots, \nu$}
	\State $W(i_\ell,j_\ell) \leftarrow 1$
	\State $r_{i_\ell} \leftarrow r_{i_\ell}-k_\ell$
	\State $c_{j_\ell} \leftarrow c_{j_\ell}-k_\ell$
\EndFor
\Statex \qquad \quad \textit{Check if any row entries are uniquely determined}
\For{$\ell=1,2,\ldots, m$} 
	\State $w \leftarrow \sum_{j=1}^n W(\ell,j)$
	\If{ $r_\ell = n-w$ } 
		\State $r_\ell \leftarrow 0$
		\For{$p=j+1, j+2, \ldots, n$}
			\State $W(\ell,p) \leftarrow 1$
			\State $c_p \leftarrow c_p - 1$
			\State $s \leftarrow s \cup \{ (\ell, p, 1) \}$
		\EndFor
	\ElsIf{ $r_\ell = 0$}
		\For{$p=j+1,j+2,\ldots,n$}
			\State $W(\ell,p) \leftarrow 1$
			\State $s \leftarrow s \cup \{ (\ell, p, 0) \}$
		\EndFor
	\EndIf
\EndFor
\Statex \qquad \quad \textit{Check if any column entries are uniquely determined}
\For{$p=1,2,\ldots, n$} 
	\State $w \leftarrow \sum_{i=1}^n W(i,p)$
	\If{ $c_p = m-w$ } 
		\State $c_p \leftarrow 0$
		\For{$\ell=i+1, i+2, \ldots, m$}
			\State $W(\ell,p) \leftarrow 1$
			\State $r_\ell \leftarrow r_\ell - 1$
			\State $s \leftarrow s \cup \{ (\ell, p, 1) \}$
		\EndFor
	\ElsIf{ $c_p = 0$}
		\For{$p=j+1,j+2,\ldots,n$}
			\State $W(\ell,p) \leftarrow 1$
			\State $s \leftarrow s \cup \{ (\ell, p, 0) \}$
		\EndFor
	\EndIf
\EndFor
\If{$W = W'$}
	\State {\bf return} $(s,r,c,W)$
\Else
	\State {\bf return} {\tt Deterministic\_Fill}$(s,r,c,W)$
\EndIf
\EndProcedure 
\end{algorithmic}
\end{algorithm}
}

We now state the main algorithm championed for contingency tables, which depending on several parameters determines the cost of the algorithm and the bias. 
This algorithm serves as the main inspiration for Algorithm~\ref{latin:square:algorithm}, and is interesting in its own right. 
 
\begin{algorithm}[H]
\caption{Random sampling of nonnegative integer-valued $(r,c)$-contingency table with forced zero entries in $W$.}
\label{CT:algorithm}
\begin{algorithmic}[1]
\State Fix a procedure to sample a bit, denoted by {\tt Bit\_Sampler}.
\State $A \leftarrow 0^{m\times n}$ (i.e., $A$ is initialized to be the $m\times n$ matrix of all 0s).
\State $({\bf s}, r, c, A, W) \leftarrow$ {\tt Deterministic\_Fill}($ \{\}, r, c, A, W)$.\label{preprocess}
\State $b_{\text{max}} \leftarrow \lceil \log_2(n) \rceil$.
\For{$b = 0, 1, \ldots, b_{\text{max}}$}
\State $W_b \leftarrow W.$
\State $X \leftarrow 0^{m\times n}$.
\For {$j=1,2,\ldots,n$}
  \For{$i=1,2,\ldots,m$} 
  	\If{$W_b(i,j) \neq 1$}
  	 \State $X_{i,j} \leftarrow$ {\tt Bit\_Sampler}$(i,j,r,c,W_b)$.\label{line_reject_bit}
	 \State $({\bf s}, r, c, X, W_b) \leftarrow$ {\tt Deterministic\_Fill}($ \{(i,j,X_{i,j})\}, r, c, X, W_b)$.\label{preprocess}
	 \EndIf
   \EndFor
\EndFor
\State $A \leftarrow A + 2^b X$
\EndFor
\State {\bf return} $A$.
\end{algorithmic}
\end{algorithm}

The function we choose for {\tt Bit\_Sampler} determines the bias as well as the complexity, and we have presented three explicit possibilities: {\tt Exact}, {\tt Exact\_Probabilistic}, and {\tt Approximate\_Probabilistic}. 
The rejection function $g$ in Line~\ref{reject_g} of Procedure~{\tt Exact}, given in Equation~\eqref{equation:mn}, is stated as an expression involving counting the total number of such tables; this is the direct analog of the recursive method, and typically requires a large table of values computed via a recursion. 
It is similarly difficult to compute the rejection function $f$ in Line~\ref{reject_f} of Procedure~{\tt Exact\_Probabilistic}, i.e., the probabilistic equivalent of $g$, but which motivates our approximation heuristic. 
Procedure~{\tt Approximate\_Probabilistic} offers an alternative which is both practical and an heuristically reasoned approximation for the exact quantities. 

\ignore{
The resulting algorithm is a generalization of~\cite[Algorithm~2]{DeSalvoCT}, and in fact, it is equivalent to Algorithm~\ref{CT:algorithm} above, except it is based on rejection sampling, where one instead samples from a related distribution and applies a rejection probability which rejects samples in a proportion which yields the desired distribution. 
One simply replaces lines 12--16 in Algorithm~\ref{CT:algorithm} above with the following lines:

\begin{tabular}{l}
12: $X(i,j) \leftarrow \mbox{Bern}(c_j / (m-h_j))$ \\
13: {\bf if} $U >  \frac{f(i,j,X(i,j),r,c,W_b)}{\max_{\ell \in \{0,1\}f(i,j,\ell,r,c,W_b)}}$ {\bf then} \\
14: \qquad Goto Line~12. \\
15: {\bf endif}
\end{tabular}

The rejection function $f$ is given by Equation~\eqref{fij}. 
Again, calculating functions $g$ or $f$ become the bottleneck in any computation. 

The form of rejection function $f$, however, suggests an approximation using rejection function $F$ given by Equation~\eqref{alternative:rejection} in place of $f$. 
 The first term in Equation~\eqref{alternative:rejection} is a probability over a sum of \emph{independent} random variables with explicitly known distributions, see Section~\ref{A:1}, and as such can be computed using convolutions in space/time $O(n\, M^2)$ or fast Fourier transforms in space/time $O(n\, M \log M)$. 
Further speedups are possible since only the first few bits of the function are needed on average. 

 
It is also straightforward to adapt this approach to binary tables, where we recall that such tables have values restricted to be in $\{0,1\}$.  
A similar approach was obtained specifically for the case of binary contingency tables without specified zeros in~\cite{chen} using an approach called sequential importance sampling, where at each step one keeps track of weights associated with each placement of a 0 or 1 in each entry, and at the end of the algorithm obtains an importance sampling weight associated with each sample. 
This algorithm is ideal for the first stage of the Latin square algorithm, since we are able to take advantage of the Gale--Ryser necessary and sufficient condition for binary contingency tables~\cite{gale1957theorem, Ryser} after completion of each column; subsequent steps utilize Algorithm~\ref{binary:algorithm}, and to our knowledge there is no analogous Gale--Ryser condition applicable to binary tables with an arbitrary prescribed set of zeros. 
 }

\begin{example}
As an example, an application of~\cite[Algorithm~2]{DeSalvoCT} yielded the following set of random bits in each of the five iterations of the algorithm for the random sampling of a contingency table with row sums $(40, 30, 30, 50, 100, 50)$ and column sums $(50, 50, 50, 50, 50, 50)$.\footnote{An implementation in Mathematica took about 1 second.}

\begin{align*}\small
\hskip-.5in \left(
\begin{array}{cccccc}
 12 & 12 & 3 & 0 & 9 & 4 \\
 2 & 7 & 0 & 11 & 5 & 5 \\
 2 & 9 & 10 & 0 & 4 & 5 \\
 6 & 1 & 24 & 3 & 14 & 2 \\
 18 & 2 & 12 & 32 & 16 & 20 \\
 10 & 19 & 1 & 4 & 2 & 14
\end{array}
\right) & = \ \ \ \small
\left(
\begin{array}{cccccc}
 0 & 0 & 1 & 0 & 1 & 0 \\
 0 & 1 & 0 & 1 & 1 & 1 \\
 0 & 1 & 0 & 0 & 0 & 1 \\
 0 & 1 & 0 & 1 & 0 & 0 \\
 0 & 0 & 0 & 0 & 0 & 0 \\
 0 & 1 & 1 & 0 & 0 & 0
\end{array}
\right)  
 +2^1 \left(
\begin{array}{cccccc}
 0 & 0 & 1 & 0 & 0 & 0 \\
 1 & 1 & 0 & 1 & 0 & 0 \\
 1 & 0 & 1 & 0 & 0 & 0 \\
 1 & 0 & 0 & 1 & 1 & 1 \\
 1 & 1 & 0 & 0 & 0 & 0 \\
 1 & 1 & 0 & 0 & 1 & 1
\end{array}
\right) \\ \\
& + 2^2 \left(
\begin{array}{cccccc}
 1 & 1 & 0 & 0 & 0 & 1 \\
 0 & 1 & 0 & 0 & 1 & 1 \\
 0 & 0 & 0 & 0 & 1 & 1 \\
 1 & 0 & 0 & 0 & 1 & 0 \\
 0 & 0 & 1 & 0 & 0 & 1 \\
 0 & 0 & 0 & 1 & 0 & 1
\end{array}
\right)
 + 2^3 \left(
\begin{array}{cccccc}
 1 & 1 & 0 & 0 & 1 & 0 \\
 0 & 0 & 0 & 1 & 0 & 0 \\
 0 & 1 & 1 & 0 & 0 & 0 \\
 0 & 0 & 1 & 0 & 1 & 0 \\
 0 & 0 & 1 & 0 & 0 & 0 \\
 1 & 0 & 0 & 0 & 0 & 1
\end{array}
\right) \\ \\
& + 2^4\left(
\begin{array}{cccccc}
 0 & 0 & 0 & 0 & 0 & 0 \\
 0 & 0 & 0 & 0 & 0 & 0 \\
 0 & 0 & 0 & 0 & 0 & 0 \\
 0 & 0 & 1 & 0 & 0 & 0 \\
 1 & 0 & 0 & 0 & 1 & 1 \\
 0 & 1 & 0 & 0 & 0 & 0
\end{array}
\right) 
 + 2^5 \left(
\begin{array}{cccccc}
 0 & 0 & 0 & 0 & 0 & 0 \\
 0 & 0 & 0 & 0 & 0 & 0 \\
 0 & 0 & 0 & 0 & 0 & 0 \\
 0 & 0 & 0 & 0 & 0 & 0 \\
 0 & 0 & 0 & 1 & 0 & 0 \\
 0 & 0 & 0 & 0 & 0 & 0
\end{array}
\right).
\end{align*}

\end{example}

\begin{example}
Consider the following example, a $3 \times 4$ nonnegative integer-valued $(r,c)$-contingency table with row sums $r = (20, 14, 18, 27)$ and column sums $c = (13, 56, 10)$.  
Furthermore, we force entries $(1,3)$, $(2,1)$, $(3,2)$, $(3,3)$, $(3,4)$ to be 0.  The first step is Line~\ref{preprocess}, which transforms the initial sampling problem into the following:
\[\hskip-.5in
\begin{array}{ccccccc}
\begin{array}{cc}
\begin{array}{|c|c|c|c|} \hline
 & & \blacksquare & \\ \hline
 \blacksquare & & & \\ \hline
 & \blacksquare & \blacksquare & \blacksquare \\ \hline
\end{array} & 
\begin{array}{c} 10 \\ 56 \\ 13 \end{array}  \\
\begin{array}{cccc} 20 & 14 & 18 & 27 \end{array}
\end{array} & 
\longrightarrow &
\begin{array}{cc}
\begin{array}{|c|c|c|c|} \hline
7 & & \blacksquare & \\ \hline
 \blacksquare & & 18 & \\ \hline
13 & \blacksquare & \blacksquare & \blacksquare \\ \hline
\end{array} & 
\begin{array}{c} 10 \\ 56 \\ 13 \end{array}  \\
\begin{array}{cccc} 20 & 14 & 18 & 27 \end{array}
\end{array} &
\longrightarrow & \\ \\
\begin{array}{cc}
\begin{array}{|c|c|c|c|} \hline
\blacksquare & & \blacksquare & \\ \hline
 \blacksquare & & \blacksquare & \\ \hline
\blacksquare & \blacksquare & \blacksquare & \blacksquare \\ \hline
\end{array} & 
\begin{array}{c} 3 \\ 38 \\ 0 \end{array}  \\
\begin{array}{cccc} 0 & 14 & 0 & 27 \end{array}
\end{array} &
\longrightarrow &
\begin{array}{cc}
\begin{array}{|c|c|} \hline
\ \  & \ \  \\ \hline
\ \  & \ \  \\ \hline
\end{array} & 
\begin{array}{c} 3 \\ 38 \end{array}  \\
\begin{array}{cccc} 14 & 27  \end{array}
\end{array} 
\end{array}
\]
Now since this reduces to a standard $2\times 2$ contingency table, we can apply a standard algorithm or continue with Algorithm~\ref{CT:algorithm}. 
We suggest in particular the \emph{exact} sampling algorithm for $2 \times n$ tables given in~\cite[Algorithm~3]{DeSalvoCT}, which for this particular example would sample a uniform number in the set $\{0, 1, 2, 3\}$ and then uniquely fill in the remaining entries. 

\end{example}

 \subsection{Binary contingency tables}
\label{A:binary}
One can formulate an exact sampling rejection function for binary contingency tables as in the previous section, although the approximation algorithm is again the one we champion due to practical computing reasons. 
It is straightforward but notationally messy to adapt and write out completely the analogous exact sampling procedures in this setting, and so we focus solely on our suggested approximate sampling approach. 

\ignore{
Let $\BC(r,c,W)$ denote the number of binary $(r,c)$-contingency tables with entry $(i,j)$ forced to be 0 if the $(i,j)$th entry of $W$ is 1. 
Let $\BC(r, c, W; t_1, \ldots, t_r)$, where each $t_\ell$, $\ell = 1, \ldots, r$ denotes an entry in the table, say $t_\ell = (i_\ell, j_\ell)$, denote the number of binary $(r,c)$-contingency tables as before with the additional assumption that the entries $t_1, \ldots t_r$ are also forced to be 0. 

Let $h_j := \sum_{i=1}^m W_{i,j}$ denote the number of non-zero elements in column $j$, $j=1,2,\ldots,n$.
Let $q_j := \frac{c_j}{m-h_j}$, and 
let $y_j := q_j^{-1}$. 

For each $1 \leq i \leq m$ and $ 1 \leq j \leq n$, let $t_{1}, \ldots, t_{r}$ denote the set of entries which are determined by the line sum conditions after sampling of entry $(i,j)$, and let $b_{1}(k), \ldots, b_{r}(k)$ denote the deterministic value of the entry in order to satisfy the row and column constraints given that entry $(i,j)$ has value $k$, where $k \in \{0,1\}$. 
Then define
\begin{align*}
A & \equiv \BC\left(\begin{array}{l} (\ldots,r_i-k,\ldots, r_{i_1} - b_{1}(k),\ldots, r_{i_r} - b_{r}(k)), \\ (\ldots,c_j-k, \ldots, c_{j_1} - b_{1}(k),\ldots, c_{j_r} - b_{r}(k)), \\ W,\ \mathcal{O}_{i,j};\  t_1, \ldots, t_r \end{array} \right)\,\cdot \, \prod_{\ell=1}^r y_{j_\ell}^{b_{\ell}(k)},  \\
B & \equiv \BC\left(\begin{array}{l} (\ldots,r_i,\ldots, r_{i_1} - b_{1}(0),\ldots, r_{i_r} - b_{r}(0)), \\ (\ldots,c_j, \ldots, c_{j_1} - b_{1}(0),\ldots, c_{j_r} - b_{r}(0)), \\ W,\ \mathcal{O}_{i,j}; \ t_1, \ldots, t_r \end{array} \right)\,\cdot \, \prod_{\ell=1}^r y_{j_\ell}^{b_{\ell}(0)}, \\
C & \equiv \BC\left(\begin{array}{l} (\ldots,r_i-1,\ldots, r_{i_1} - b_{1}(1),\ldots, r_{i_r} - b_{r}(1)), \\ (\ldots,c_j-1, \ldots, c_{j_1} - b_{1}(1),\ldots, c_{j_r} - b_{r}(1)), \\ W,\ \mathcal{O}_{i,j};\  t_1, \ldots, t_r \end{array} \right)\,\cdot \, \prod_{\ell=1}^r y_{j_\ell}^{b_{\ell}(1)}, 
\end{align*}
\begin{equation}\label{binary:equation:mn} h(i,j,k,r,c,W) :=  \frac{A}{B+C}. \end{equation}
}
\ignore{
\begin{algorithm}[H]
\caption{Generation of an exactly uniform binary $(r,c)$-contingency table with forced zero entries in W.}
\label{binary:algorithm}
\begin{algorithmic}[1]
\State {\bf Input: } $r = (r_1, \ldots, r_m)$, $c = (c_1, \ldots, c_n)$, and $W \in \{0,1\}^{m\times n}$.
\State {\bf Output: } $(r,c)$-binary contingency table with entires in $W$ forced to be 0.

\State Let $J_j(W) \leftarrow \{1,\ldots, n\} \setminus \{ i : W(i,j) = 1\}$ for each $j = 1,2,\ldots,n$.
\State Let $k_j$ denote the last entry in $J_j(W)$, $j=1,2,\ldots,n$. 
\For {$j=1,2,\ldots,n$}
  \For{each $i \in J_j\setminus k_j$}
     \State $q_j \leftarrow \frac{c_j}{(m-h_j)+c_j}$
  	  \If{ $U \leq  h(i,j,0,r,c,W) $}
	       \State $X(i,j) \leftarrow 0$.
	  \Else
	       \State $X(i,j) \leftarrow 1$.
	  \EndIf
	  	 \State $W(i,j) \leftarrow 1.$
	 \State Fill in all deterministic bits based on parity of constraints. 
	 \State $W(r,s) \leftarrow 1$ where $(r,s)$ runs through all entries which were filled in deterministically.
	 \State Update $r, c, J_j, k_j.$ 

    \EndFor
\EndFor
\State {\bf return} $X$.
\end{algorithmic}
\end{algorithm}

Next we present a more probabilistic approach, which yields a natural approximate sampling algorithm. 
}

Let $B_j(q)$, $j=1,\ldots,n$, denote independent Bernoulli random variables with success probability $1-q$.  
We have 
\begin{equation}\label{g:binary} 
b(n,{\bf q},k)  := \P\left( \sum_{j=1}^n B_j(q_j) = k\right) \nonumber = \frac{1}{n+1}\sum_{\ell = 0}^n C^{-\ell k}\prod_{j=1}^n \left(1+(C^\ell-1)(1-q_j)\right),  
\end{equation}
where $C = \exp\left(\frac{2\pi i}{n+1}\right)$ is a $(n+1)$th root of unity.  
\ignore{Define 
\[ d_{n,k} := \frac{\max\left(b(n,{\bf q}, {\bf c}, k), b(n,{\bf q}, {\bf c}, k-1)\right)}{b(n-1,{\bf q}, {\bf c}, k))}, \]
and $d := \max_{n,k} d_{n,k}$. }
The expression in~\eqref{g:binary} is a numerically stable way to evaluate the convolution of a collection of independent Bernoulli random variables using a fast Fourier transform, see for example~\cite{PoissonBinomial}. 
Next, we consider an $m\times n$ matrix $W$ which contains values in $\{0,1\}$.  A value 1 in entry $(i,j)$ of $W$ implies that this entry is forced to be zero. 
Given an index set $J \subset \{1,2,\ldots, n\}$, we define similarly
\[ b(n,{\bf q},k,J) := \P\left( \sum_{j\in J} B_j(q_j) = k\right). \]


For each $i=1,2,\ldots,m$, let $I_i \leftarrow \{1,\ldots, n\} \setminus \{j : W(i,j) = 1\}$ denote the set of entries in row $i$ which can be non-zero; and for each $j=1,2,\ldots,n$, let $J_j \leftarrow \{1,\ldots, n\} \setminus \{ i : W(i,j) = 1\}$ denote the set of entries in column $j$ which can be non-zero. 
Also let ${\bf q} = (c_1/(m-h_1), c_2/(m-h_2),\ldots, c_n/(m-h_n))$ and let ${\bf \nu_j} = (c_j/(m-h_j), c_j/(m-h_j), \ldots, c_j/(m-h_j))$. 
Then we define for $i \in I_i$ and $j \in J_j$
\begin{equation} 
\label{binary:rejection}
H(i,j,k,r,c,W) := b(m, {\bf q}, r_i - k, I_i) \times b(n,{\bf \nu_j}, c_j-k, J_j), 
\end{equation}
and $0$ otherwise. 
We therefore define the following procedure to generates bits, noting that the wording of the {\tt Deterministic\_Fill} function defined in the previous section is equally applicable in this setting.

\begin{algorithm}[H]
\label{bit_exact}
\caption*{Approximate sampling of entry $(i,j)$ of a binary contingency table}
\begin{algorithmic}[1]
\Procedure{Approximate\_Probabilistic\_Binary}{$i,j,r,c,A,W$}
\State $({\bf s}_0,r_0,c_0,A_0,W_0) \leftarrow$ {\tt Deterministic\_Fill}$(\{(i,j,0)\}, r, c, A, W)$
\State $({\bf s}_1, r_1,c_1,A_1,W_1) \leftarrow$ {\tt Deterministic\_Fill}$(\{(i,j,1)\}, r, c, A, W)$
\State $p_0 \leftarrow \prod_{\ell} \P(X_{i_\ell^0, j_\ell^0} = k_\ell^0) \times H(i,j,0,r_0,c_0,W_0)$
\State $p_1 \leftarrow \prod_{\ell} \P(X_{i_\ell^1, j_\ell^1} = k_\ell^1) \times H(i,j,0,r_1,c_1,W_1)$
\If{$u \leq \frac{p_0}{p_0+p_1}$}
	       \State {\bf return} $0$
	  \Else
	       \State {\bf return} $1$
	  \EndIf
\EndProcedure 
\end{algorithmic}
\end{algorithm}

\ignore{
For a given set of row sums $r = (r_1, \ldots, r_m)$ and column sums $c = (c_1, \ldots, c_n)$, let $\B\C(r,c)$ denote the number of $(r,c)$-binary contingency tables. 
Let $\mathcal{O}$ denote an $m\times n$ matrix with entries in $\{0,1\}$.  
For any set of row sums and column sums $(r,c)$, let $\B\C(r,c,\mathcal{O})$ denote the number of $(r,c)$-contingency tables with entry $(i,j)$ forced to be 0 if the $(i,j)$th entry of $\mathcal{O}$ is 1, and no restriction otherwise. 
Let $\mathcal{O}_{i,j}$ denote the matrix which has entries with value $1$ in the first $j-1$ columns, and entries with value $1$ in the first $i$ rows of column $j$, and entries with value 0 otherwise. 

Define for $1 \leq i \leq m-2, 1 \leq j \leq n-2$, 
\begin{equation}\label{rejection:ij}\small
 f(i,j,k,r,c) :=  \frac{\B\C\left( \begin{array}{l}(\ldots,r_{i}-k,\ldots), \\ (\ldots,c_{j}-k,\ldots), \\ \mathcal{O}_{i,j}\end{array} \right)}{\B\C\left( \begin{array}{l}(\ldots,r_{i}-1,\ldots), \\ (\ldots,c_{j}-1,\ldots), \\ \mathcal{O}_{i,j}\end{array} \right) + \B\C\left( \begin{array}{l}(\ldots,r_{i},\ldots), \\ (\ldots,c_{j},\ldots), \\ \mathcal{O}_{i,j}\end{array} \right)}.
\end{equation}
}

\ignore{
\begin{algorithm}[H]
\caption{Generation of an approximately uniform binary $(r,c)$-contingency table with forced zero entries in W.}
\label{binary:algorithm:approx}
\begin{algorithmic}[1]
\State {\bf Input: } $r = (r_1, \ldots, r_m)$, $c = (c_1, \ldots, c_n)$, and $W \in \{0,1\}^{m\times n}$.
\State {\bf Output: } $(r,c)$-binary contingency table with entires in $W$ forced to be 0.

\State Let $J_j(W) \leftarrow \{1,\ldots, n\} \setminus \{ i : W(i,j) = 1\}$ for each $j = 1,2,\ldots,n$.
\State Let $k_j$ denote the last entry in $J_j(W)$, $j=1,2,\ldots,n$. 
\For {$j=1,2,\ldots,n$}
  \For{each $i \in J_j\setminus k_j$}
     \State $q_j \leftarrow \frac{c_j}{(m-h_j)+c_j}$
     \State $X(i,j) \leftarrow \mbox{Bern}(q_j / (1+q_j))$. \label{Bern}
  	  \If{ $U > \frac{H(i,j,X(i,j),r,c,W)}{\max_{\ell \in \{0,1\}H(i,j,\ell,r,c,W)}} $}
	       \State Goto Line~\ref{Bern}.
	  \EndIf
	  	 \State $W(i,j) \leftarrow 1.$
	 \State Fill in all deterministic bits based on parity of constraints. 
	 \State $W(r,s) \leftarrow 1$ where $(r,s)$ runs through all entries which were filled in deterministically.
	 \State Update $r, c, J_j, k_j.$ 

    \EndFor
\EndFor
\State {\bf return} $X$.
\end{algorithmic}
\end{algorithm}
}

\begin{lemma}\label{binary_lemma}
Procedure {\tt Approximate\_Probabilistic\_Binary} has an arithmetic cost of $O(n^2+m^2)$
\end{lemma}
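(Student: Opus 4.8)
The plan is to bound the arithmetic cost line by line and to use the fact that a procedure assembled from a constant number of $O(n^2+m^2)$ steps is itself $O(n^2+m^2)$. Procedure \texttt{Approximate\_Probabilistic\_Binary} consists of two calls to \texttt{Deterministic\_Fill}, the formation of the two weights $p_0$ and $p_1$, and a single final comparison of cost $O(1)$; so it suffices to bound each of the first three items.

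The two invocations of \texttt{Deterministic\_Fill} each cost $O(n^2+m^2)$ by Lemma~\ref{deterministic_fill_lemma}. For the weight $p_0$ (the analysis of $p_1$ is identical), the factor $\prod_\ell \P(X_{i_\ell^0,j_\ell^0}=k_\ell^0)$ ranges over the entries determined by the associated \texttt{Deterministic\_Fill} call, of which there are at most $mn$ since the table has only $mn$ entries; each contributes one Bernoulli point probability computed in $O(1)$ operations, so this product costs $O(mn)$, and since $mn \le \tfrac12(m^2+n^2)$ it lies within $O(n^2+m^2)$.

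The only genuinely analytic ingredient is the evaluation of $H(i,j,0,r_0,c_0,W_0)$. By \eqref{binary:rejection} this factors as a product of two Poisson--binomial point probabilities: one is a sum over the index set $I_i$, of size at most $n$, and the other a sum over $J_j$, of size at most $m$. I would evaluate each via the root-of-unity formula \eqref{g:binary}: a point probability of a sum of $N$ Bernoulli variables is a sum of $N+1$ terms, each of which is a product of at most $N$ factors, and is therefore computed in $O(N^2)$ arithmetic operations. Taking $N\le n$ for the first factor and $N\le m$ for the second shows $H$ costs $O(n^2+m^2)$, and summing the three contributions yields the claim. The step needing the most care is this last one, where one must note that \eqref{g:binary} is evaluated at a single value of its argument (so no reuse of the $k$-independent inner products across different arguments is needed) and that restricting the Bernoulli sum to $I_i$ or $J_j$ leaves the $O(N^2)$ count unchanged; an FFT-based evaluation would sharpen this to $O(n\log n+m\log m)$, but the cruder quadratic bound already suffices. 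Otherwise the lemma is a bookkeeping argument resting on Lemma~\ref{deterministic_fill_lemma} and the closed form \eqref{g:binary}.
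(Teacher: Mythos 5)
Your proof is correct and follows the same route as the paper: it charges $O(n^2+m^2)$ to each of the two \texttt{Deterministic\_Fill} calls via Lemma~\ref{deterministic_fill_lemma}, and bounds the evaluation of $H$ through Equations~\eqref{binary:rejection} and~\eqref{g:binary} by $O(n^2+m^2)$. The paper's own proof is just a terser version of this; your accounting of the Bernoulli-product factor and the explicit $O(N^2)$ count for the root-of-unity formula are details the paper leaves implicit.
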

\begin{proof}
By Lemma~\ref{deterministic_fill_lemma}, each of the two calls to Procedure~{\tt Deterministic\_Fill} take at most $O(n^2+m^2)$. 
In addition, the cost to evaluate $H$ via Equation~\eqref{binary:rejection} and Equation~\eqref{g:binary} is also at most $O(n^2+m^2)$. 
\end{proof}

We thus define the algorithm for approximate sampling of binary $(r,c)$-contingency tables as follows. 

\begin{algorithm}[H]
\caption{Generation of an approximately uniform binary $(r,c)$-contingency table with forced zero entries in W.}
\label{binary:algorithm:approx}
\begin{algorithmic}[1]
\State {\bf Input: } $r = (r_1, \ldots, r_m)$, $c = (c_1, \ldots, c_n)$, and $W \in \{0,1\}^{m\times n}$.
\State {\bf Output: } binary $(r,c)$-contingency table with entries in $W$ forced to be 0.
\State $X \leftarrow 0^{m\times n}$ (i.e., $X$ is initialized to be the $m\times n$ matrix of all 0s).
\State $({\bf s}, r, c, X, W) \leftarrow$ {\tt Deterministic\_Fill}($ \{\}, r, c, X, W)$.\label{preprocess}
\For {$j=1,2,\ldots,n$}
  \For{$i=1,2,\ldots,m$} 
  	\If{$W(i,j) \neq 1$}
  	 \State $X_{i,j} \leftarrow$ {\tt Approximate\_Probabilistic\_Binary}$(i,j,r,c,W)$.\label{line_reject_bit_binary}
	 \State $({\bf s}, r, c, X, W) \leftarrow$ {\tt Deterministic\_Fill}($ \{(i,j,X_{i,j})\}, r, c, X, W)$.\label{preprocess}
	 \EndIf
   \EndFor
\EndFor
\State {\bf return} $X$.\end{algorithmic}
\end{algorithm}

\subsection{Latin squares of order $n$}
\ignore{
Suppose $b = (b_1, \ldots, b_k)\in \{0,1\}^k$ is some combination of $k$ 0s and 1s, for any $k \geq 1$. 
Given also positive integer $n$, let $n_b$ denote a sequence of ceiling and floor functions, each followed by a division by 2, applied to $\lceil n/2 \rceil$.  For example, 
\[ n_{101} = \lceil \lfloor \lceil \lceil n/2 \rceil /2 \rceil / 2\rfloor / 2 \rceil. \]

We now define elements $t_b$ inductively.  
Start with $t^{(0)}$, an $m \times n$ matrix of $0$s. 
Let $t_0$ denote an $m \times n$ matrix, conditional on $t^{(0)}$, which is 1 wherever $t$ is 0, and 0 otherwise.
Similarly, let $t_1$ denote an $m \times n$ matrix, conditional on $t$, which is 1 whenever $t$ is 1, and 0 otherwise. 
Let $t^{(1)}$ denote an $m\times n$ matrix with each entry consisting of tuple $(g_0, g_1)$, where $g_0$ is the corresponding element in $t_0$, and $g_1$ is the corresponding element in $t_1$. 
Now define $t^{(b)}$, $b$ of size $k$, conditional on $t^{(b)}$, $b$ of size $k-1$, to be an $m\times n$ matrix which is 1 whenever 

Similarly, let $W_b$ denote a matrix in $\{0,1\}^{m\times n}$ which is conditional on 
}

We now state a complete version of our main algorithm for random sampling of Latin squares. 

\begin{algorithm}[H]
\caption{Generation of an approximately uniformly random Latin square of order $n$.}
\label{latin:square:algorithm}
\begin{algorithmic}[1]
\State $r \leftarrow (\lfloor n/2 \rfloor, \ldots, \lfloor n/2 \rfloor)$.
\State $c \leftarrow (\lfloor n/2 \rfloor, \ldots, \lfloor n/2 \rfloor)$. 
\State Let $W$ denote the $n\times n$ matrix with all 0 entries.
\State $t \leftarrow$ the output from Algorithm~\ref{binary:algorithm:approx} with input $(r,c,W)$.
\For {$i=1,2,\ldots, \lceil \log_2(n)\rceil$}
        \State $r \leftarrow \lfloor r/2\rfloor$.
        \State $c \leftarrow \lfloor c/2 \rfloor$. 
    \For{$ b = 0, 1, \ldots, 2^{i}-1$} 
        \State Let $W_b(i,j)$ be 0 if $t(i,j) = b$, and 1 otherwise, for all $1 \leq i \leq m, 1 \leq j \leq n$.
        \State $t_b \leftarrow$ the output from Algorithm~\ref{binary:algorithm:approx} with input $(r, c, W_b)$.\label{rejection_goes_here}
        \State $t \leftarrow t + 2^i t_b$
    \EndFor
\EndFor
\State {\bf return} $t+1$
\end{algorithmic}
\end{algorithm}

\ignore{
\begin{remark}
It bears repeating that while a priori this algorithm appears to be a sum over an exponentially growing number of tables, and indeed for exposition purposes we have chosen this form in order to clearly convey the procedure, each of the $2^k$ calls to Algorithm~\ref{binary:algorithm} inside the innermost for loop can be performed with one loop over each entry in the table, being careful to apply the correct rejection function for each entry $(i,j)$ based on previously observed bits in entry $(i,j)$. 
This makes the required time $O(n^2)$ times the cost to evaluate and pass the rejection function threshold in Algorithm~\ref{binary:algorithm}. 
\end{remark}}

\begin{remark}
It is easy to adapt this algorithm for the random sampling of $n^2 \times n^2$ Sudoku matrices, with an extra rejection due to the local block constraints; see for example~\cite{DeSalvoNewton} for the precise definition.  
One could start with Algorithm~\ref{latin:square:algorithm} and perform a hard rejection at each iteration until the binary contingency table generated also satisfies the block constraints, or one could adapt the rejection probability to more effectively target the block constraints. 
\end{remark}

 
\subsection{Algorithmic costs and a word of caution}

We are unable at present to formulate precise runtime estimates, only to say we have implemented the approximation algorithms in practice and they exhibit roughly the runtime estimates we show below, with one major caveat. 
Part of the difficulty in precisely costing the algorithms is that, for the exact sampling algorithms, they rely on computing numerical quantities for which no known polynomial time algorithm is known.  
For the approximation algorithms, we have not yet undertaken a full analysis of the bias, and in fact, we caution that the algorithm can potentially reach a state for which the algorithm cannot terminate due to unsatisfiable constraints. 
In those cases it is up to the practitioner whether the algorithm ought to halt altogether and restart, or whether some partially completed table should be salvaged; again, since we have not performed a detailed analysis of the bias in these algorithms, we have not explored this matter further, except to note that the algorithm is surprisingly effective when it avoids such unsalvageable states. 

In each of the estimates below, we assume $m$ denotes the number of rows and $n$ denotes the number of columns in a contingency table, and that $W$ is some $m \times n$ matrix with a $1$ in the $(i,j)$-th entry if the $(i,j)$-th entry of the contingency table is forced to be 0. 
We also define 
\[ R := \max(r_1, \ldots, r_m), \qquad C := \max(c_1, \ldots, c_n), \qquad M := \max(R,C). \]

For Algorithm~\ref{CT:algorithm}, with Procedure~{\tt Approximate\_Probabilistic} used for the {\tt Bit\_Sampler} routine, it is easy to see that barring the encounter of a state which cannot be completed, the expected number of random bits required is $O(n\,m\,\log(M))$, one for each visit to each entry per bit-level. 
For the arithmetic cost, each call to Procedure~{\tt Approximate\_Probabilistic} involves $O(R \log R)$ arithmetic operations, since the evaluation of the function $F$ can be performed in a numerically stable manner using fast Fourier transforms. 
In addition, each call to Procedure~{\tt Deterministic\_Fill} involves $O(m^2+n^2)$ arithmetic operations by Lemma~\ref{deterministic_fill_lemma}. 
Thus, this corresponds to an informal arithmetic cost which is $O((m^2+n^2) m\, n\, R\, \log(R) \log(M))$ in the best case scenario. 

For Algorithm~\ref{binary:algorithm:approx}, it is easy to see that the expected number of random bits is $O(n\, m)$, with $O((n^2+m^2)m\,n)$ arithmetic operations, since we iterate through each of the $n\, m$ entries at most once, each time calling Procedure~{\tt Approximate\_Probabilistic\_Binary}. 

Finally, for Algorithm~\ref{latin:square:algorithm}, the expected number of random bits required is $O(n^2 \log(n))$ with $O(n^4 \log n)$ arithmetic operations.

\ignore{
\begin{theorem}\label{ct_theorem_tables}
Consider Algorithm~\ref{CT:algorithm} with Procedure~{\tt Approximate\_Probabilistic} used for the {\tt Bit\_Sampler} routine. 
The expected number of random bits required for the algorithm to run to completion is $O( n\, m\, \log(M))$, with $O((m^2+n^2) m\, n\, R\, \log(R) \log(M))$ arithmetic operations.
\end{theorem}
\begin{proof}
Each call to Procedure~{\tt Approximate\_Probabilistic} involves $O(R \log R)$ arithmetic operations, since the evaluation of the function $F$ can be performed in a numerically stable manner using fast Fourier transforms. 
In addition, each call to Procedure~{\tt Deterministic\_Fill} involves $O(m^2+n^2)$ arithmetic operations by Lemma~\ref{deterministic_fill_lemma}. 
Finally, since we call these procedures on each of the $m\, n$ entries, and iterate at most $\log(M)$ times, the result follows. 
\end{proof}

\begin{theorem}\label{binary_ct_theorem_tables}
For Algorithm~\ref{binary:algorithm:approx}, the expected number of random bits required is $O(n\, m)$, with $O((n^2+m^2)m\,n)$ arithmetic operations.
\end{theorem}
\begin{proof}
Similarly as in the Proof of Theorem~\ref{ct_theorem_tables}, we iterate through each of the $n\, m$ entries at most once, each time calling Procedure~{\tt Approximate\_Probabilistic\_Binary}. 
By Lemma~\ref{binary_lemma}, this has cost $O(n^2+m^2)$, hence the result follows. 
\end{proof}

\begin{theorem}
For Algorithm~\ref{latin:square:algorithm}, the expected number of random bits required is $O(n^2 \log(n))$ with $O(n^4 \log n)$ arithmetic operations. 
\end{theorem}
\begin{proof}
We iterate through each of the $n^2$ entries at most $\log(n)$ times, due to the disjoint nature of the divide-and-conquer strategy on the entries.  
At each entry we potentially call Procedure~{\tt Deterministic\_Fill}, which is $O(n^2)$ by Lemma~\ref{deterministic_fill_lemma}, and hence the result follows. 
\end{proof}
}

\ignore{
\subsection{Latin squares of order $n$}
\ignore{
Suppose $b = (b_1, \ldots, b_k)\in \{0,1\}^k$ is some combination of $k$ 0s and 1s, for any $k \geq 1$. 
Given also positive integer $n$, let $n_b$ denote a sequence of ceiling and floor functions, each followed by a division by 2, applied to $\lceil n/2 \rceil$.  For example, 
\[ n_{101} = \lceil \lfloor \lceil \lceil n/2 \rceil /2 \rceil / 2\rfloor / 2 \rceil. \]

We now define elements $t_b$ inductively.  
Start with $t^{(0)}$, an $m \times n$ matrix of $0$s. 
Let $t_0$ denote an $m \times n$ matrix, conditional on $t^{(0)}$, which is 1 wherever $t$ is 0, and 0 otherwise.
Similarly, let $t_1$ denote an $m \times n$ matrix, conditional on $t$, which is 1 whenever $t$ is 1, and 0 otherwise. 
Let $t^{(1)}$ denote an $m\times n$ matrix with each entry consisting of tuple $(g_0, g_1)$, where $g_0$ is the corresponding element in $t_0$, and $g_1$ is the corresponding element in $t_1$. 
Now define $t^{(b)}$, $b$ of size $k$, conditional on $t^{(b)}$, $b$ of size $k-1$, to be an $m\times n$ matrix which is 1 whenever 

Similarly, let $W_b$ denote a matrix in $\{0,1\}^{m\times n}$ which is conditional on 
}

We now state our main algorithm for random sampling of Latin squares, presented to be as simple and concise to follow as possible. 
As stated, it is a priori the sum of an exponential number of tables; a fortiori, the non-zero entries of these tables form a partition of all $n^2$ entries, and so one would implement the algorithm below to switch between the various rejection functions and thus avoid an exponential number of calls to Algorithm~\ref{binary:algorithm}. 

\begin{algorithm}[H]
\caption{Generation of an approximately uniformly random Latin square of order $n$.}
\label{latin:square:algorithm}
\begin{algorithmic}[1]
\State $r \leftarrow (\lfloor n/2 \rfloor, \ldots, \lfloor n/2 \rfloor)$.
\State $c \leftarrow (\lfloor n/2 \rfloor, \ldots, \lfloor n/2 \rfloor)$. 
\State Let $W$ denote the $n\times n$ matrix with all 0 entries.
\State $t \leftarrow$ the output from Algorithm~\ref{binary:algorithm} with input $(r,c,W)$.
\For {$i=1,2,\ldots, \lceil \log_2(n)\rceil$}
        \State $r \leftarrow \lfloor r/2\rfloor$.
        \State $c \leftarrow \lfloor c/2 \rfloor$. 
    \For{$ b = 0, 1, \ldots, 2^{i}-1$} 
        \State Let $W_b(i,j)$ be 0 if $t(i,j) = b$, and 1 otherwise, for all $1 \leq i \leq m, 1 \leq j \leq n$.
        \State $t_b \leftarrow$ the output from Algorithm~\ref{binary:algorithm} with input $(r, c, W_b)$.
        \State $t \leftarrow t + 2^i t_b$
    \EndFor
\EndFor
\State {\bf return} t+1
\end{algorithmic}
\end{algorithm}

\begin{remark}
It bears repeating that while a priori this algorithm appears to be a sum over an exponentially growing number of tables, and indeed for exposition purposes we have chosen this form in order to clearly convey the procedure, each of the $2^k$ calls to Algorithm~\ref{binary:algorithm} inside the innermost for loop can be performed with one loop over each entry in the table, being careful to apply the correct rejection function for each entry $(i,j)$ based on previously observed bits in entry $(i,j)$. 
This makes the required time $O(n^2)$ times the cost to evaluate and pass the rejection function threshold in Algorithm~\ref{binary:algorithm}. 
\end{remark}

\begin{remark}
To make this an exact sampling algorithm one would need to apply an additional rejection after each call to Algorithm~\ref{binary:algorithm}. 
We have suppressed this rejection, as it requires enumerative estimates which are not easily accessible. 
\end{remark}
}

\ignore{
\section{Proofs}

\subsection{Proof of Theorem~\ref{exact:theorem:tables}}

We also have an analogous lemma for the quantitative bounds, cf.~\cite[Lemma~2.8]{DeSalvoCT}. 
\begin{lemma}\label{bound}
Any algorithm which uses $\L\left(\Bern\left(\frac{q_j}{1+q_j}\right)\right)$, where $q_j = \frac{c_j}{m-h_j + c_j}$, as the surrogate distribution for $\L(\epsilon_{i,j}|E)$ in rejection sampling, where $\epsilon_{i,j}$ denote the least significant bit of $X_{i,j}$, assuming each outcome in $\{0,1\}$ has a positive probability of occurring, accepts a bit with an incorrect proportion bounded by at most $m+2$, where $m$ is the number of rows. 
\end{lemma}
\begin{proof}
Let $B$ be distributed as $\L\left(\Bern\left(\frac{q_j}{1+q_j}\right)\right)$.  We have
\[ \frac{q_j}{1+q_j} = \frac{c_j}{m-h_j+2c_j}, \]
and so 
\[
\begin{array}{rcl}
 \frac{1}{m-h_j+2} & < \Pr(B = 1) < & \frac{1}{2}  \\
 \frac{1}{2}  & < \Pr(B=0) < & \frac{m-h_j+1}{m-h_j+2}.
\end{array}
\]
We are able to scale up by the larger point probability, i.e., the event $\{B=0\}$, which means we accept a generated bit of 1 with probability 1, for which we would have to wait at most an expected $m-h_j +2 \leq m+2$ iterations.  Thus, at worst we accept a bit $m+2$ times more likely in one state than the other. 
\end{proof}

All that remains is the costing estimates. 
For Algorithm~\ref{CT:algorithm}, there is no rejection, and so the cost in terms of the expected number of random bits required is simply the cost of visiting all $s$ entries at most $\log M$ times to assign a bit.

For Algorithm~\ref{CT:algorithm:approx}, the rejection adds at most $O(m)$ iterations for each entry, which implies $O(s\, m\, \log M)$ expected random bits required. 
To calculate the rejection probabilities is a convolution of at most $n$ independent random variables.  Each convolution costs a priori $O(M^2)$ directly, or $O(M\, \log M)$ using an FFT.  Repeating this $n$ times gives $O(n\, M\, \log M)$ for each entry, and so the total arithmetic cost is $O(s\, n\, M\, \log M)$. 

\subsection{Proof of Theorem~\ref{binary:theorem:tables}}

We start with a similar treatment as in the previous section, except instead of taking $X_{i,j}$ to be geometrically distributed, with parameter $p_{i,j} = 1-\alpha_i \beta_j$, we instead take $X_{i,j}$ to be Bernoulli with parameter $\frac{q_{i,j}}{1+q_{i,j}}$, where $q_{i,j} = 1-p_{i,j}$. 
In fact, this is precisely the distribution of the least significant bit utilized in Algorithm~\ref{CT:algorithm:approx}. 

\begin{lemma}
\label{lemma:uniform}
Let $\X_W=(X_{ij})_{1 \leq i \leq m, 1 \leq j \leq n}$ denote a collection of independent Bernoulli random variables with parameters $p_{ij}$, such that $W(i,j) = 1$ implies $p_{i,j} = 0$, and otherwise $p_{ij}$ has the form $p_{ij} = \frac{\alpha_i \beta_j}{1+\alpha_i \beta_j}$. Then $\X_W$ is uniform restricted to binary $(r,c)$-contingency tables with zeros in entries indicated by $W$.
\end{lemma}

The proof is straightforward, as is the analogous result for Lemma~\ref{bound} and Lemma~\ref{expectation}. 

The proof of uniformity of Algorithm~\ref{binary:algorithm} follows again almost verbatim from~\cite[Section~4]{DeSalvoCT}, aside from straightforward modifications of boundary conditions, summing over index sets $j \in I_i$ rather than $j=1,2,\ldots,n$, and replacing the role of $\Sigma(r,c,\mathcal{O})$ with $\BC(r,c,W)$.

The costing estimates are also straightforward.  
For Algorithm~\ref{binary:algorithm}, there is no rejection, and so the cost in terms of the expected number of random bits required is simply the cost of visiting all $s$ entries once.

For Algorithm~\ref{binary:algorithm:approx}, the rejection adds at most $O(m)$ iterations for each entry, which implies $O(s\, m)$ expected random bits required. 
To calculate the rejection probabilities is again a convolution of at most $n$ independent random variables.  
Each convolution costs a priori $O(M^2)$ directly, or $O(M\, \log M)$ using an FFT.  
Repeating this $n$ times gives $O(n\, M\, \log M)$ for each entry, and so the total arithmetic cost is $O(s\, n\, M\, \log M)$. 
Since this is a binary contingency table, we must have $M \leq \max(m,n)$, although we prefer to keep this as a separate variable to highlight the case of sparse tables.

\subsection{Proof of Theorem~\ref{latin:square:theorem}}

The fact that a Latin square can be decomposed into its corresponding bits is straightforward, as is the connection with collections of binary $(r,c)$-contingency tables with restrictions. 

It is not a priori obvious whether the bits at different levels are independent, and whether certain configurations of binary tables at a given level can potentially be completed by a larger number of Latin squares than other configurations. 
For $n=6$, we are able to provide a negative answer, although it would be interesting to explore how non-uniform the distribution truly is. 
According to the OEIS sequence~A058527~\cite{OEIS}, the number of $6 \times 6$ binary contingency tables with row sums and column sums equal to 3 is 297200, which does not divide the number of Latin squares of order $6$, which is 812851200, see for example~\cite{McKayWanless}.  Thus, some of the  configurations of binary tables yield a different number of completable Latin squares. 
If one could bound the range of possible completions given such a binary contingency table, then one could obtain quantitative bounds on the number of Latin squares. 



Next we consider the arithmetic cost.
The algorithm requires visiting each entry at most $\lceil \log_2(n)) \rceil$ times.  
Let us consider the first time entry $(1,1)$ is visited.  
The rejection probability given in Equation~\eqref{binary:rejection} is the product of two convolutions of at most $n$ elements each, for a total of $2\times O\left(n\, \frac{n}{2}\, \log \frac{n}{2}\right)$. 
The second time entry $(1,1)$ is visited, the cost is $O\left(\frac{n}{2} \frac{n}{4} \log \frac{n}{4}\right)$. 
Summing over at most $\lceil \log_2(n)\rceil$ visits, the total cost associated with entry $(1,1)$ is $O(n^2)$. 

Consider the entry $(i,j)$.  The convolutions in Equation~\eqref{binary:rejection} have an initial cost of 
\[O\left( (n-i) \frac{n-i}{2} \log \frac{n-i}{2} + (n-j) \frac{n-j}{2}\log \frac{n-j}{2}\right) = O\left(n\, \frac{n}{2} \log \frac{n}{2}\right). \]

Thus, summing over all entries $O(\log n)$ times we have at most $O(n^4 \log(n))$ arithmetic operations total. 

The expected number of random bits comes from the rejection probabilities at each step. 
Consider the first step, which is the random sampling of binary $(\lfloor n/2\rfloor ,\lfloor n/2\rfloor )$-contingency tables. 
By the previous section, this costs $O(n^3)$, and we repeat this at most $O(\log n)$ times. 
}

\ignore{
\section{Other extensions}
\label{continuous}
One could also consider, e.g., tables with continuous--valued entries. 
In this case, the conditioning event is more delicate, as we can condition on events of probability zero with sufficient regularity; see~\cite{PDCDSH}. 
One can reason a priori that if the conditioning event $E$ can be written as a random variable $T$ with a density evaluated at various values $k \in \mbox{supp}(T)$, then a similar PDC algorithm can be adapted, see~\cite{PDCDSH, DeSalvoCT}. 

In particular, like the decomposition of geometric random variables into their individual bits, an exponential distribution also has an analogous property. 
For real $x$, $\{x \}$ denotes the fractional part of $x$, and $\lfloor x \rfloor $ denotes the integer part of $x$, so that $x = \lfloor x \rfloor + \{x \}$. 

\begin{lemma}
Let $Y$ be an exponentially distributed random variable with parameter $\lambda>0$, then:
\begin{itemize}
\item the integer part, $\lfloor Y \rfloor$, and the fractional part, $\{Y\}$, are independent~\cite{Rejection, SteutelThiemann};
\item $\lfloor Y\rfloor$ is geometrically distributed with parameter $1-e^{-\lambda}$, and $\{Y\}$ has density $f_\lambda(x) = \lambda e^{-\lambda x} / (1-e^{-\lambda})$, $0\leq x < 1$. 
\end{itemize}
\end{lemma}

Using this property, a random sampling algorithm for nonnegative real-valued $(r,c)$-contingency tables is presented in~\cite[Algorithm~6]{DeSalvoCT}. 
The algorithm first samples the fractional part of each entry of the table; conditional on this first step, the remaining sampling problem is the usual random sampling of nonnegative integer-valued $(r',c')$-contingency table, for which Algorithm~\ref{CT:algorithm} is applicable. 

We are not aware of any non-trivial generalizations of Latin squares to real-valued entries. 
Here is one which may be of interest. 
Define a partition $J_1, \ldots, J_n$ of the interval $[0,n]$, and demand that that each row and each column of a matrix $M$ has exactly one entry in each of the sets $J_1, \ldots, J_n$. 
One can ask, then, to sample from the uniform measure over the set of all such matrices, which has a density with respect to Lebesgue measure. 
Another generalization would be to take $J_1, \ldots, J_n$ such that $J_1 \cup \ldots \cup J_n = [0,n]$, without the assumption that the sets form a partition, i.e., allow overlap. 
}


\ignore{

\section{Calculations for small $n$}
\label{section_calculations}
\subsection{$n=2$}

Consider the set of Latin squares of order~$2$, i.e., $\left(\begin{array}{cc}1&2\\2&1\end{array}\right)$ and $\left(\begin{array}{cc}2&1\\1&2\end{array}\right)$. 
These correspond to the bit-wise decomposition
\begin{align}
\left(\begin{array}{cc}1&2\\2&1\end{array}\right) & = 2^1 \left(\begin{array}{cc}0&1\\1&0\end{array}\right) + 2^0 \left(\begin{array}{cc}1&\blacksquare\\\blacksquare&1\end{array}\right) \\
\left(\begin{array}{cc}2&1\\1&2\end{array}\right) & = 2^1 \left(\begin{array}{cc}1&0\\0&1\end{array}\right) + 2^0 \left(\begin{array}{cc}\blacksquare&1\\ 1&\blacksquare\end{array}\right),
\end{align}
where we use $\blacksquare$ to denote a forced 0 entry. 
An application of Algorithm~\ref{latin:square:algorithm} starts by sampling from $X_{11}$, which to yield the uniform distribution over such tables must be $0$ or $1$ with equal probability. 
We have $c_1 = c_2 = 1$, $m=2$, $q_1 = q_2 = \frac{1}{3}$ and $\frac{q_1}{1+q_1} = \frac{q_2}{1+q_2} = 1/4$. 
The algorithm generates $X_{11}$ as a Bernoulli$(1/4)$ and applies one of the following rejection probabilities depending on whether the outcome is $0$ or $1$. 
\begin{align*}
R_0 = \P(X_{12} = 1) \P(X_{21}=1) \P(X_{22} = 0) = \frac{3^2}{4^2}, \\
R_1 = \P(X_{12} = 0) \P(X_{21}=0) \P(X_{22} = 1) = \frac{3}{4^2}. 
\end{align*}
The scaling factor in this case is given by $\alpha = \min(R_0, R_1) = \frac{3}{4^2}$. 
Let $\tilde R_0 = R_0 / \alpha$, and $\tilde R_1 = R_1 / \alpha$. 
For this particular example we have $\tilde R_0 = 1/3$. 
The overall probability of accepting $0$ for the $(1,1)$ entry is then given by the geometric series 
\begin{align*}
 \P(\tilde X_{11}=0) & = \P(X_{11} = 0) \tilde R_0 + \P(X_{11}=0)^2 (1-\tilde R_0) \tilde R_0 + \P(X_{11}=0)^3 (1-\tilde R_0)^2 \tilde R_0 + \cdots \\
     & = \frac{ \P(X_{11}=0)\tilde R_0}{1-\P(X_{11}=0)(1-R_0)} = \frac{1}{2}. 
\end{align*}
Thus, each outcome is produced with probability $1/2$, corresponding to the uniform distribution. 

\subsection{$n=3$}

Rather than apply Algorithm~\ref{latin:square:algorithm} directly to the case $n=3$, we shall approach it in the opposite direction by looking at various $3 \times 3$ tables. 

Let us take for our next example the $3\times 3$ table with row sums and column sums all equal to 1, of which there are $6!$ tables corresponding to the permutation matrices of $3$. 
This case allows us an easy way to asses potential bias in Algorithm~\ref{binary:algorithm:approx}. 
In this case we have $c_1 = 1$, $m=3$, $q_1 = 1/4$, and $\frac{q_1}{1+q_1} = 1/5$. 
We have 
\begin{align*}
 R_0 & = \P(X_{12} +X_{13} = 1) \P(X_{21}+X_{31}=1) = \left(2\, \frac{4}{5^2}\right)^2 = \frac{4^3}{5^4}. \\
 R_1 & = \P(X_{12} +X_{13} = 0) \P(X_{21}+X_{31}=0) = \left(\frac{4^2}{5^2}\right)^2 = \frac{4^4}{5^4}. 
\end{align*}
Hence, $\tilde R_0 = 1/4$, and as noted previously we must have $\tilde R_1 = 1$. 
The overall probability of accepting a $0$ in entry $(1,1)$ is therefore $1/2$, by the same geometric series argument, whereas for the uniform distribution over permutation matrices of size~$3$ it should be $2/3.$
We thus tend to under sample permutation matrices with a $1$ in entry $(1,1)$, with an error proportion not exceeding $|1/2 - 2/3| = 1/6$. 

Next, we consider a $3 \times 3$ table with row sums and column sums all equal to $1$, with the following fixed zeros
\[ \left(\begin{array}{ccc}  \underline{\ \ }& \blacksquare &\underline{\ \ } \\ \blacksquare& \underline{\ \ }&\underline{\ \ } \\ \underline{\ \ }&\underline{\ \ } & \blacksquare\end{array}\right). \]
WLOG, this covers the case of all possible $3\times 3$ tables row sums and column sums all equal to 1 with exactly one entry in each row and column fixed to be 0. 
We have $c_1 = 1, m=3, h_1 = 1$, and so $q_1 = \frac{c_1}{c_1+m-h_1}$
}

\ignore{
\section{Further directions}

It is most common when working with Latin squares to normalize the first row and the first column to be the identity permutation $\{1,2,\ldots,n\}$. 
We have not carried out this in the interest of simplicity, but this would be a natural optimization to the current approach. 
}

\ignore{
\section{Recycle}

Our results are now summarized below.

\begin{theorem}\label{exact:theorem:tables}
Let $m$ denote the number of rows and $n$ denote the number of columns in a nonnegative integer-valued $(r,c)$-contingency table, and let $M$ denote the largest row sum or column sum. 
Let $W$ denote an $m\times n$ matrix with a $1$ in the $(i,j)$th entry if the $(i,j)$-th entry of the contingency table is forced to be 0, and let $s := m\, n - \sum_{i,j} W_{i,j}$ denote the number of entries not forced to be 0. 
\begin{enumerate}
\item[(1)] Algorithm~\ref{CT:algorithm} returns a nonnegative integer-valued $(r,c)$-contingency table, uniform over all such tables. 
The expected number of random bits required is $O( s \log M)$. 
\item[(2)] Algorithm~\ref{CT:algorithm:approx} returns a nonnegative integer-valued $(r,c)$-contingency table, which is approximately uniform. 
The expected number of random bits required is $O( s\, m\, \log M)$, with $O(s\, n\, M\, \log^2 M)$ arithmetic operations. 
\end{enumerate}
\end{theorem}

Item (1)~in Theorem~\ref{exact:theorem:tables} demonstrates that, should one be able to compute certain numerical quantities efficiently, one would have an optimal sampling method.  
Item (2)~in Theorem~\ref{exact:theorem:tables} offers an alternative explicit and practical approximation algorithm which does not require any enumeration formulas, and instead uses convolution of independent random variables with explicitly known distribution functions. 
Analogous results hold for binary $(r,c)$-contingency tables. 

In addition, motivated by the initial results concerning Shannon's entropy of the collections of Latin squares and Sudoku matrices in~\cite{DeSalvoNewton}, later extended in~\cite{Cameron1, Cameron2}, the goal of sampling uniformly from Latin squares and Sudoku matrices allows us to glean important information about various statistics of interest in information theory, e.g., Shannon's entropy, compared to uniformly random matrices without restrictions. 
Several PDC algorithms were applied recently for the \emph{exact} sampling from Latin squares and Sudoku matrices in~\cite{DeSalvoSudoku}, by taking advantage of certain statistics in, e.g.,~\cite{FelgenJarvis, Sudoku2}, extending the range of practical exact sampling methods for these structures from those in~\cite{Dahl, Yordzhev2}. 
There are, however, alternatives like importance sampling for Sudoku matrices~\cite{RidderIS}, and also Markov chain techniques which have been specifically applied to Latin squares, see for example~\cite{Fontana, FontanaFractions}.  

\begin{theorem}\label{latin:square:theorem}
Algorithm~\ref{latin:square:algorithm} produces a valid Latin square of order $n$, approximately uniform over all such Latin squares, which terminates in finite time a.s. 
The expected number of random bits required is $O(n^3\, \log(n))$, with $O(n^4 \log n)$ arithmetic operations.  
\end{theorem}

One can also generalize the aforementioned tables to having continuous--valued entries, in which case many of the traditional sampling algorithms either break down or require significant adaptation. 
Using PDC, such generalities are often easily handled, often by adding in just one extra step, see Section~\ref{continuous}; see also~\cite[Algorithm 6]{DeSalvoCT} and~\cite{PDCDSH}.

We start with a well-known probabilistic model for the entries in a random contingency table, generalized to include entries which are forced to be 0.
In this section, we let $W$ be any given $m\times n$ matrix with values in $\{0,1\}$. 
For each $j=1,2,\ldots,n$, we define $J_j := \{1,\ldots, n\} \setminus \{ i : W(i,j) = 1\}$, and let $k_j$ denote the last entry in $J_j$. 
Similarly, for each $i=1,2,\ldots,m$, we define $I_i := \{1,\ldots, n\} \setminus \{j : W(i,j) = 1\}$, and let $\ell_i$ denote the last entry in $I_i$, $i=1,2,\ldots,m$. 
Also, we let $h_j = \sum_{i=1}^m W(i,j)$ denote the number of entries forced to be zero in column $j$, for $j=1,2,\ldots,n$.

\begin{lemma}
\label{lemma:uniform}
Let $\X_W=(X_{ij})_{1 \leq i \leq m, 1 \leq j \leq n}$ denote a collection of independent geometric random variables with parameters $p_{ij}$, such that $W(i,j) = 1$ implies $X_{i,j}$ is conditioned to have value 0.   
If $p_{ij}$ has the form $p_{ij} = 1 - \alpha_i \beta_j$, then $\X_W$ is uniform restricted to $(r,c)$-contingency tables with zeros in entries indicated by $W$.
\end{lemma}
\begin{proof}
Let $\X=(X_{ij})_{1 \leq i \leq m, 1 \leq j \leq n}$ denote a collection of independent geometric random variables with parameters $p_{ij}$, where $p_{ij}$ has the form $p_{ij} = 1 - \alpha_i \beta_j$. 
We have 
\[\P\big(\X=\xi\big)
=\prod_{i,j}\P\big(X_{ij}=\xi_{ij}\big)
=\prod_{i,j}(\alpha_i \beta_j)^{\xi_{ij}}(1-\alpha_i \beta_j)
=\prod_i\alpha_i^{r_i} \prod_j\beta_j^{c_j}\prod_{i,j}\big(1-\alpha_i \beta_j\big).
\]
Since this probability does not depend on $\xi$, it follows that the restriction of $\X$ to $(r,c)$-contingency tables is uniform. 
As the collection of random variables are independent, conditioning on any $X_{i,j} = 0$ only changes the constant of proportionality, and does not affect the dependence on the $\xi$, hence 
\[\P\big(\X_W=\xi\big)
=\prod_{i,j : W(i,j) = 0}\P\big(X_{ij}=\xi_{ij}\big)
=\prod_{i,j: W(i,j)=0}(\alpha_i \beta_j)^{\xi_{ij}}(1-\alpha_i \beta_j)
=\prod_i\alpha_i^{r_i} \prod_j\beta_j^{c_j}\prod_{i,j}\big(1-\alpha_i \beta_j\big);
\]
i.e., it follows that the restriction of $\X_W$ to $(r,c)$-contingency tables with forced zero entries indicated by $W$ is uniform. 
\end{proof}

\begin{lemma}\label{expectation}
Suppose $\X_W$ is a collection of independent geometric random variables, where $X_{i,j}$ has parameter $p_{ij} = \frac{m-h_j}{m-h_j+c_j}$, for all pairs $(i,j)$ such that $W(i,j) = 0$, and $p_{i,j} = 1$ for all pairs $(i,j)$ such that $W(i,j) = 1$.  Then the expected column sums of $\X_W$ are $c_1, c_2, \ldots, c_n$, and the expected row sums are $\sum_{j\in I_1} \frac{c_j}{m-h_j}, \ldots, \sum_{j\in I_m} \frac{c_j}{m-h_j}$. 
\end{lemma}
\begin{proof}
Note first that $\E X_{i,j} = p_{i,j}^{-1} - 1$.  Then 
for any $j=1,2,\ldots,n$,
\[ \sum_{i=1}^m \E\, X_{i,j} = \sum_{i \in J_j} \frac{m-h_j+c_j}{m-h_j} - 1= c_j  \]
and similarly for any $i=1,2,\ldots, m$, 
\[ \sum_{j=1}^n \E\, X_{i,j} = \sum_{j \in I_i} \frac{m-h_j+c_j}{m-h_j}-1 = \sum_{j\in I_i} \frac{c_j}{m-h_j}. \qedhere\]
\end{proof}

The proof of uniformity of Algorithm~\ref{CT:algorithm} now follows almost verbatim from~\cite[Section~4]{DeSalvoCT}, aside from straightforward modifications of boundary conditions and summing over index sets $j \in I_i$ rather than $j=1,2,\ldots,n$. 
Start by considering any the first entry in the first column not forced to be 0, say at entry $(s,1)$, and denote the least significant bit by $\epsilon_{s,1}$. 
As we are in the first column, we have $\L(\epsilon_{s,1}) = \L\left(\Bern\left(\frac{q_1}{1+q_1}\right)\right)$, and we reject according to the correct proportion $\P(E | \epsilon_{s,1})$. 
In fact, 
as we reject \emph{in proportion} to this probability, we normalize by all terms which do not depend on $k$, which gives
\[ \P(E | \epsilon_{s,1} = k) \propto \Sigma\left(\begin{array}{l} (\ldots,r_s-k,\ldots, r_{i_1} - b_{1}(k),\ldots, r_{i_r} - b_{r}(k)), \\ (\ldots,c_1-k, \ldots, c_{j_1} - b_{1}(k),\ldots, c_{j_r} - b_{r}(k)), \\ W,\ \mathcal{O}_{i,j};\  t_1, \ldots, t_r \end{array} \right)\,\cdot \, q_1^{-k}\, \left(\prod_{\ell=1}^r q_{j_\ell}^{-b_{\ell}(k)}\right).\]
Since $\P(\epsilon_{s,1} = k) \propto q_1^k$, the rejection function in Equation~\eqref{equation:mn} follows. 
The rest of the proof follows by induction in a straightforward manner, cf.~\cite[Section~4]{DeSalvoCT}. 

The second part of the theorem, with rejection function given in Equation~\eqref{alternative:rejection}, is motivated by the following alternative probabilistic formulation of Equation~\eqref{equation:mn}:
 \begin{align}\label{fij1}
\hspace{-0.25in} f(i,j,k, r, c,W)  \propto \ &
 \P\left( \begin{array}{llll}  
\sum_{\ell\in J_1}^{\ell < j} 2\xi''_{1,\ell}(q_\ell^2, c_\ell) &+ \eta_{1,j,i}'(q_j, c_j) &+  \sum_{\ell\in J_1}^{\ell > j} \xi'_{1,j}(q_\ell, c_\ell) &= r_1 \\
\sum_{\ell\in J_2}^{\ell < j} 2\xi''_{2,\ell}(q_\ell^2, c_\ell) &+ \eta_{2,j,i}'(q_j, c_j) &+  \sum_{\ell\in J_2}^{\ell > j} \xi'_{2,j}(q_\ell, c_\ell) &= r_2 \\
\qquad \vdots \\
\sum_{\ell\in J_{i-1}}^{\ell < j} 2\xi''_{i-1,\ell}(q_\ell^2, c_\ell) &+ \eta_{i-1,j,i}'(q_j, c_j) &+  \sum_{\ell\in J_{i-1}}^{\ell > j} \xi'_{i-1,j}(q_\ell, c_\ell) &= r_{i-1} \\
\sum_{\ell\in J_i}^{\ell < j}2\xi''_{i,\ell}(q_\ell^2, c_\ell) &+ \eta_{i,j,i}''(q_j, c_j) &+  \sum_{\ell\in J_i}^{\ell > j} \xi'_{i,j}(q_\ell, c_\ell) &= r_{i}-k \\
\sum_{\ell\in J_{i+1}}^{\ell < j}2\xi''_{i+1,\ell}(q_\ell^2, c_\ell) &+ \eta_{i+1,j,i}''(q_j, c_j) &+  \sum_{\ell\in J_{i+1}}^{\ell > j} \xi'_{i+1,j}(q_\ell, c_\ell) &= r_{i+1} \\
\qquad \vdots \\
\sum_{\ell\in J_m}^{\ell < j}2\xi''_{m,\ell}(q_\ell^2, c_\ell) &+ \eta_{m,j,i}''(q_j, c_j) &+  \sum_{\ell\in J_m}^{\ell > j} \xi'_{m,j}(q_\ell, c_\ell) &= r_{m} \\
\end{array}\right) \\
\nonumber & \ \ \ \  \times  \P\left( \sum_{\ell\in I_i, \ell \leq i} 2\, \xi_{\ell,j}(q_j^2) + \sum_{\ell\in I_i, \ell > i} \xi_{\ell,j}(q_\ell) = c_j - k \right). 
\end{align} 
If we could evaluate the probability above exactly, or to some arbitrarily defined precision, then we would obtain an exact sampling algorithm. 
However, we surmise that the dependencies between the random variables are strongest along the $i$-th row and the $j$-th column, which is why we champion the rejection function in Equation~\eqref{alternative:rejection}, as it captures what is most likely the dominant source of bias, and enforces the parity condition as well. 
}

\section{Acknowledgements}
The author would like to acknowledge helpful discussions with Alejandro Morales, Igor Pak, Richard Arratia, and James Zhao.

\bibliographystyle{plain}
\bibliography{../../../master_bib}

\end{document}